\newtheorem{thm}{Lemma}
\newtheorem{prop}{Proposition}
\newcommand\blfootnote[1]{%
	\begingroup
	\renewcommand\thefootnote{}\footnote{#1}%
	\addtocounter{footnote}{-1}%
	\endgroup
}
\begin{document}
%\title{Coded Caching for Heterogeneous Users' Behaviours }
\title{Content Caching for Shared Medium Networks Under Heterogeneous Users' Behaviours }

\author{\IEEEauthorblockN{Abdollah Ghaffari Sheshjavani\IEEEauthorrefmark{1}, 
Ahmad Khonsari\IEEEauthorrefmark{1}\IEEEauthorrefmark{2},	
Seyed Pooya Shariatpanahi\IEEEauthorrefmark{1}, 
and Masoumeh Moradian\IEEEauthorrefmark{2}\\ 
%\IEEEauthorblockA{\IEEEauthorrefmark{1}School of Electrical and Computer Engineering, University of Tehran, Iran}
%\IEEEauthorblockA{\IEEEauthorrefmark{2}School of Computer Science, Institute for Research in Fundamental Sciences (IPM), Iran}
%Emails: \{abdollah.ghaffari, a\_khonsari, p.shariatpanahi\}@ut.ac.ir, mmoradian@ipm.ir
}
\thanks{\IEEEauthorblockA{\IEEEauthorrefmark{1}School of Electrical and Computer Engineering, College of Engineering, University of Tehran, Iran}}
\thanks{\IEEEauthorblockA{\IEEEauthorrefmark{2}School of Computer Science, Institute for Research in Fundamental Sciences (IPM), Iran}}
\thanks{Emails: \{abdollah.ghaffari, a\_khonsari, p.shariatpanahi\}@ut.ac.ir, mmoradian@ipm.ir}
}

\maketitle

\begin{abstract}
\fontdimen2\font=0.54ex
Content caching is a widely studied technique aimed to reduce the network load imposed by data transmission during peak time while ensuring users' quality of experience.  
It has been shown that when there is a common link between caches and the server, delivering contents via the coded caching scheme can significantly improve performance over conventional caching. However, finding the optimal content placement is a challenge in the case of heterogeneous users' behaviours. In this paper we consider heterogeneous number of demands and non-uniform content popularity distribution in the case of homogeneous and heterogeneous user-preferences. We propose a hybrid coded-uncoded caching scheme to trade-off between popularity and diversity. We derive explicit closed-form expressions of the server load for the proposed hybrid scheme and formulate the corresponding optimization problem.  
Results show that the proposed hybrid caching scheme can reduce the server load significantly and outperforms the baseline pure coded and pure uncoded and previous works in the literature for both homogeneous and heterogeneous user preferences.
\end{abstract}

% no keywords
\begin{IEEEkeywords}
Cache-aided communication, small cell networks, coded caching, heterogeneous  user preference.
\end{IEEEkeywords}

\blfootnote{Some parts of this paper is the extended version of the problem that is presented at the WCNC 2020 conference\cite{9120820}.}

%\IEEEpeerreviewmaketitle

\section{Introduction}

\subsection{Background}
\fontdimen2\font=0.54ex
The global increase in the penetration of high throughput wireless devices such as tablets and smartphones has significantly facilitated the growing demand for mobile content through wireless media in recent years. Deploying small base stations (SBSs) to increase spatial reuse of the frequency spectrum by shrinking the network cells size is a promising solution to alleviate this growth and has stimulated many research initiatives \cite{chandrasekhar2008femtocell}. Nonetheless, the high cost of  wired links and the bottleneck of wireless links still poses as the main obstacle to providing high-speed backhaul to connect SBSs to the core network in this approach. To address this problem in content delivery scenarios, caching popular contents at these SBSs has been proposed to relieve the need for high-speed backhaul links \cite{shanmugam2013femtocaching, paschos2016wireless, chen2017probabilistic}.

In general, conventional caching methods attempt to cache the most popular contents located at close proximity of end-users such that the requests for popular contents are directly served from the local caches. This results in the so-called local caching gain, which is proportional to the local memory size. 
In \cite{maddah2014fundamental}, the authors introduced a novel coded caching scheme
 that significantly improves performance over conventional caching by leveraging the multicasting nature of the shared (such as wireless) medium even for caches with distinct demands. Their scheme, in addition to the local caching gain, results in global caching gain through using coded-multicast opportunities. The global caching gain 
 is proportional to the aggregate memory of all the caches, where every user benefits from its cache contents to decode the desired content and to remove the interference in the coded message due to other caches requests. This idea has been further generalized to hierarchical coded caching \cite{karamchandani2016hierarchical}, multi-server coded caching \cite{shariatpanahi2016multi}, decentralized coded caching \cite{maddah2015decentralized, 7742405, 7999228}, online coded caching\cite{7055939}, device to device (D2D) coded caching  \cite{7342961}, hybrid server-D2D coded caching \cite{8904142}, coded caching with
 asynchronous user requests \cite{8374865},
and coded caching with multiple file requests \cite{wei2017coded,7133172, 8594642}.

To increase multicasting opportunities in coded caching, diverse parts of the library should be cached among different users, i.e. the \emph{diversity principle}. However, in a set-up with non-uniform content popularity distribution, it is desirable to cache more popular contents with higher frequency, i.e. the \emph{popularity principle}, which makes the cache contents of different users almost the same. As these two principles, namely diversity and popularity, are in tension, cache placement design in such scenarios is very challenging. %In addition, previous studies indicate that the global popularity cannot be directly used to infer the local popularity of contents \cite{7875170}.

Heterogeneity can affect on the tension between diversity and popularity by affecting on multicasting opportunities. We can divide heterogeneity in caching into two main architectural and users' behavioral categories. Architectural heterogeneity includes different content and cache sizes and unequal users' downloading rate \cite{8017548, 8943165, 8704184,8977539}. Users' behavioral heterogeneity includes different users' preferences\cite{8849597 ,8761540} and different numbers of users' requests at each time slot \cite{multilevel2017}. 

%\subsection{Main Contributions}
 In this paper, we investigate the content caching in a shared medium network and propose a hybrid coded-uncoded caching under heterogeneous users' behaviors in order to minimize the shared medium traffic volume. we consider a caching scheme which partitions the contents into three groups; coded-cached, uncoded-cached, and non-cached ones. 
 In particular, we first focus on the problem of coded caching for non-uniform user-independent (homogeneous-i.e., it is identical for all users) content popularity distribution where each user may request multiple contents in each query. Then, we generalize this problem to non-uniform user-dependent (heterogeneous) content popularity distribution. In both cases, we derive explicit closed-form expressions of the shared medium traffic for the proposed hybrid coded caching scheme. In fact, this scheme proposes the optimal trade-off between popularity (uncoded caching gain) and diversity (coded caching gain). 
 
 In practice, the proposed heterogeneous caching scenario corresponds to a cellular network that includes a Macro Base Station (MBS) and multiple SBSs where each SBS is equipped with a limited size cache and serves multiple users. In this regard, on one hand, the number of requests each SBS sends to MBS in each query (or time slot), depends on the number of users it serves. On the other hand, the number of requests for each content depends on the popularity distribution over the SBS coverage area. 
 
Finally, the numerical and simulation results show that the proposed hybrid caching outperforms the baseline pure coded, pure uncoded and previous works as well as the two-partitioning scheme reported in 
\cite{hachem2015effect, li2017traffic, ji2017order, zhang2018coded, 8863425} for both SBS-independent and SBS-dependent content popularity distributions.

\subsection{Organization}
The rest of the paper is organized as follows. In Section~\ref{sec2} an overview of the coded caching is provided and related works are reviewed. In Section~\ref{sec3}, the system model is introduced. The proposed caching schemes for multiple requests with SBS-independent and SBS-dependent non-uniform demands is described in Section~\ref{hetero-semi} and ~\ref{hetero-hetero}, respectively. 
This is followed by numerical analysis and simulation results in Section~\ref{sec6}. Finally, Section~\ref{sec7} concludes the paper.
\section{ Background of Coded Caching and Related Works}
\label{sec2}
In this section, we first summarize the coded caching scheme reported in \cite{maddah2014fundamental} and then review the subsequent related works.
\subsection{Background on Coded Caching}
The authors in \cite{maddah2014fundamental} consider a system with one server connected through a shared, error-free link to $K$ users. The server access to the database of $N$ contents each of size $F$ bits. Each user is equipped with a cache memory of size $MF$ bits. Their system operates in two phases: a $placement$ phase and a $delivery$ phase. 
In the placement phase,  %of the original coded caching scheme \cite{maddah2014fundamental},
each content is split into ${K}\choose{T}$ non-overlapping equal-sized sub-files, where $T=K \!\times\! M / N$ and the size of each sub-file
is equal to $F/{{K}\choose{T}}$. The sub-files are distributed at caches such that each cache stores $M/N$ of each content. Moreover, each sub-file has $T$ copies in $T$ different caches. In the delivery phase, each cache receives a request for a single content. The server then XORs the required sub-files by different caches according to a specific coding strategy and multicasts coded messages to the corresponding groups of $T+1$ caches. The achievable rate of the coding strategy for serving all contents at the shared link is proven to be \cite{maddah2014fundamental}:
\begin{equation}
R = K \left(1-\frac{M}{N}\right) \min \left\{\frac{1}{1+K\times M/N},\frac{N}{K}\right\}.
\label{eq2}
%\vspace{-0.3em}
\end{equation}
%\color{red} 
Where $K (1-\frac{M}{N})$ is the local caching gain and $\frac{1}{1+K\times M/N}$ is the global caching gain.
%\color{black}
\subsection{Related Works}
Although the original coded caching scheme introduced in \cite{maddah2014fundamental} performs well under homogeneous systems, the scheme is inefficient in non-uniform and heterogeneous content popularity and also heterogeneous architectural scenarios (\cite{niesen2017coded, hachem2015effect, li2017traffic, ji2017order, zhang2018coded, multilevel2017,8943165, 8704184, 8977539, 8849597, 8761540,8863425}).
In the non-uniform content popularity (user-independent), different contents have different popularity but the popularity of any particular content is the same for all users (\cite{niesen2017coded, hachem2015effect, li2017traffic, ji2017order, zhang2018coded, multilevel2017}).  
On the other side, in the user-dependent content popularity, in addition to the fact that the popularity of contents is not the same, also the popularity of each particular content is not the same for different users \cite{8849597, 8761540}. User-dependent content popularity is also called user preference in the literature.

To handle coded caching for non-uniform content popularity scenarios, one major approach in the literature is grouping contents based-on their popularity\cite{niesen2017coded, hachem2015effect, li2017traffic, ji2017order, zhang2018coded, multilevel2017}.  For the first time, authors in \cite{niesen2017coded} proposed a grouping method to address non-uniform content popularity. In their method, in the placement phase, the library is partitioned into almost equiprobable groups and each user's cache is evenly shared among these groups. Finally, each group is treated as a single coded caching problem originally proposed in \cite{maddah2014fundamental}. The efforts in \cite{hachem2015effect, li2017traffic, ji2017order, zhang2018coded, 8863425} show that the asymptotically optimum placement strategy of grouping method of \cite{niesen2017coded} is to partition the library into two groups: the popular contents are cached according to the scheme in \cite{maddah2014fundamental} while the non-popular contents are not cached at all. 
However, authors in \cite{ multilevel2017} show by some examples that when each cache receives multiple requests, in the grouping method of \cite{niesen2017coded}, partitioning the library altogether into three groups improves caching performance over two-partitioning placement strategies. In their method, the first part is cached fully at all the cache memories, and the second part is cached according to the original coded caching paradigm and the last part is not cached at all. Nevertheless, the authors in \cite{multilevel2017} assume that the library is divided into multiple levels, based on varying degrees of popularity. Besides, this work does not consider the closed-form expressions for the optimum partitioning under arbitrary popularity distribution. 

Some works, such as \cite{8226776} show that for uniform popularity distribution, the caching strategy proposed in \cite{maddah2014fundamental} can be improved by removing sending some redundancy in the delivery phase. For example, suppose we have one content $A$ of size $F$ bits, three users, and a caching size of $1/3$$\times F$ bits for each user. Based-on the placement strategy of \cite{maddah2014fundamental}, $A$ is divided into 3 pieces $A_1, A_2, A_3$ which each piece cached in the corresponding user. In the delivery phase, the scheme in \cite{maddah2014fundamental} suggests to broadcast $A_1 \oplus A_2$, $A_1 \oplus A_3$ and $A_2 \oplus A_3$ that provide a delivery rate of 1. But $A_2 \oplus A_3$ can be recovered from  $(A_1 \oplus A_2) \oplus ( A_1 \oplus A_3)$, and therefore we do not need to broadcast it and the delivery rate reduces to $2/3$. 
Authors in \cite{8769911} generalized this idea to propose a new coded caching strategy under uncoded placement to handle non-uniform demands. This strategy uses equal sub-packetization for all contents while allowing to allocate more cache to more popular content. However, they propose and analyze the delivery strategy only for the case of the existing two contents in the system.

Other approaches, such as \cite{ji2015efficient,8761540} used a structured clique cover algorithm for all demands of the users to handle coded caching with non-uniform content popularity and other aspects of heterogeneity. In \cite{ji2015efficient}, the authors introduced a coded caching scheme where the users have more than one request, and the content popularity is non-uniform. They used a random popularity-based algorithm for cache placement and adapting the idea of dividing each content into equal-sized sub-contents. Then, subsequently they used a greedy constrained local graph coloring technique to find multicast opportunities in the delivery phase. Authors in \cite{8761540} consider heterogeneous user preferences. In this work, each user caches its most probable content at the placement phase, then in the delivery phase, based on the request and cached matrix, tries to gain from multicast opportunities. Therefore, these works do not consider any optimization for the placement phase.

  The architectural heterogeneity in coded caching is considered in some works, such as\cite{8017548, 8943165, 8704184, 8977539}. Authors in \cite{8017548} studied the coded caching with unequal link rates and proposed the use of nested coded modulation (NCM) coding in the delivery phase. However, the main drawback of this work is that the cache size for each user needs to be correctly allocated to adapt NCM transmission, in a way that users with lower link rate need a larger cache size. In\cite{8977539}, the authors analyzed the coded caching problem in a generalized scenario of the D2D coded caching \cite{7342961}, where the cache size of users is unequal. In addition, coded caching under non-uniform file-length, non-uniform users cache size, and non-uniform content popularity is considered in \cite{8943165}. This work shows that finding optimal caching with the three aforementioned heterogeneity has exponential complexity. Therefore, they developed a tractable optimization problem corresponding to a caching scheme with the above three  heterogeneities and showed numerically that it performs well compared to the original exponentially scaling problem. Authors in \cite{8704184} also considered the coded caching problem under non-uniform users' cache size and download rate. Some works, such as \cite{8291755, 8503143} considered heterogeneous quality-of-service requirements in which each content may have various resolution copy based on the different users' device resolution requirements. However, these works do not consider the heterogeneity of user behaviors, such as user-dependent content popularity, while, previous studies indicate that the global popularity cannot be directly used to infer the local popularity of contents \cite{7875170}.

 Finally, the authors in \cite{8849597} considered some aspects of two categories of heterogeneity, such as heterogeneous content sizes, heterogeneous cache size, and user-dependent content popularity. Although this work considers full heterogeneous content popularity, it analyzes the problem only for a very small scale such as a two users / two files scenario.

\section{System Model}
\label{sec3}
We consider a cellular network that consists of one MBS, which is connected through a shared error-free link to $K$ SBSs, as depicted in \figurename{~\ref{fig1}}. The content library has $N=\{W_1, W_2, ... , W_N\}$ distinct contents that are all accessible by the MBS.  Without loss of generality, we assume that all contents have the same size equal to $F$ bits. Each SBS has a cache memory of size $M \times F$ bits for some integer number of $M \in [0, N]$, and SBS $c$ is responsible for serving $Z_c$ users where $c \in \{1, 2, \dots, K\}$. Each user can connect and receive data from the MBS and only one SBS.

\begin{table}
	\centering
	\small
	\caption{Summery of the main notation.}
	\label{table:notation}
\begin{tabular}{|l|l|}%{l l}
	\hline 
	Symbol & Explanation \\ 
	\hline 
	$n$ & generic content  \\ 
	$c$ & generic SBS (cache)  \\ 
	$i$& generic step of sending coded contents \\
	$N$ & number of contents  \\ 
	$F$ & size of each content(bit)  \\
	$K$ & number of SBS \\ 
	$M$ & cache capacity of each SBS (content)  \\
	$Z_c$ & number of users (demands) in range of SBS $c$ \\ 
	$Z_{max}$ & $\max (\{Z_c\}_{c=1}^K)$\\
	$D_c$ & demand vector of SBS $c$ \\
	$q_{c,j}$ & probability of requesting the $j$th distinct\\&
	coded content at the next request in SBS $c$ \\
	$p_{n,c}$ & popularity of content $n$ in SBS $c$ \\
    $q_c^{coded}$ &  the queue of the coded requests of SBS $c$\\
	$q^{uncoded}$ &  the queue corresponding to uncoded requests\\
	$P^{(c)}_i$&  probability of at least $i$ distinct requests\\ & 
	in $q_c^{Coded}$ at step $i$\\
	$Q_i$ & number of non-empty coded queues  at step $i$\\
	$l_{c}$ &  number of distinct request in the $q_c^{coded}$\\
	$Q_i^{(c)} $& number of non-empty coded queues  in first $c$ \\ &
	caches at step $i$\\
	$l^{(z)}_{c}$ &  number of distinct coded request in first $z$\\ &
	 requests that received by SBS $c$\\
	$g$ & generic group of SBSs\\
	$Y_{n,c}$& indicate content $n$ is (or not) cached in SBS $c$ \\ 
	$X_{n,g}$& indicate content $n$ is (or not) cached in group $g$\\
	$S_{c,g}$& indicate SBS $c$ is (or not) participate in group $g$ \\
	$r_1$& traffic load (MBS) for the coded contents\\ 
	$r_1^{(i)}$& $r_1$ at step $i$\\ 
	$r_2$& traffic load (MBS) for un-cached contents\\ 
	$r$& total traffic load of the MBS $(r_1+r_2)$\\ 
	\hline 
\end{tabular} 
\end{table}
Similar to previous works, our system operates in two phases: the content placement phase and the content delivery phase. The placement phase is carried out during off-peak times. In this phase, the caching strategy determines some functions of all contents $S_c=f_c(W_1, ..., W_N)$, $c \in \{ 1, 2, \dots, K\}$ that must be cached at each SBS based on the system parameters such as cache memory constraint and content popularity, then the caches are filled with corresponding contents from the library.  

In the delivery phase, only the MBS has access to the whole library. Moreover, each SBS receives one request from each user connected to it within its range, resulting in a total of $Z_c$ requests. Therefore, each SBS may receive multiple requests for some contents of the library. Denote $D_c=[d_{1,c}, ... , d_{Z_c, c}]$ as the demand vector of SBS $c$, where $d_{i,c}$ is the content requested by user $i$. Moreover, the number of distinct contents in $D_c$ can be between $1$ and $Z_c$ (due to the possibility of duplicate requests for contents by different users). Upon collecting the requests of the users in SBSs, the MBS receives the list of distinct requested contents from each SBS and then, sends information of size $R$ bits over the shared link to satisfy these requests. 
The content popularity distribution is arbitrary and can be SBS-dependent. We specify $p_{n,c}$ as the probability of requesting the content $W_n$  by  users under coverage of SBS $c$ where $n \in \{1,2, \dots, N\}$, and  $c \in \{1, 2, \dots, K\}$. 

In the delivery phase, the network load may be high, and the MBS suffers from low available bandwidth. Thus, our objective is to design a content caching scheme that minimizes the traffic load on the shared link, which is the bottleneck in the delivery phase. However, as we show in section ~\ref{hetero-hetero}, finding the optimal placement strategy of coded caching that minimizes the traffic load is intractable in the case of heterogeneous content popularity distribution across different SBSs (i.e., SBS-dependent content popularity). 

 In this paper, we first consider designing of an optimal cache placement for the case of SBS-independent content popularities in Section ~\ref{hetero-semi}, as the assumption of homogeneous popularities has been adopted by many previous works. %(please give some ref.s here)
 Then, we generalize the same problem to the case of SBS-dependent one, in Section ~\ref{hetero-hetero}. 

\begin{figure}[!t]
	\centering
	\includegraphics[width=3.2in]{./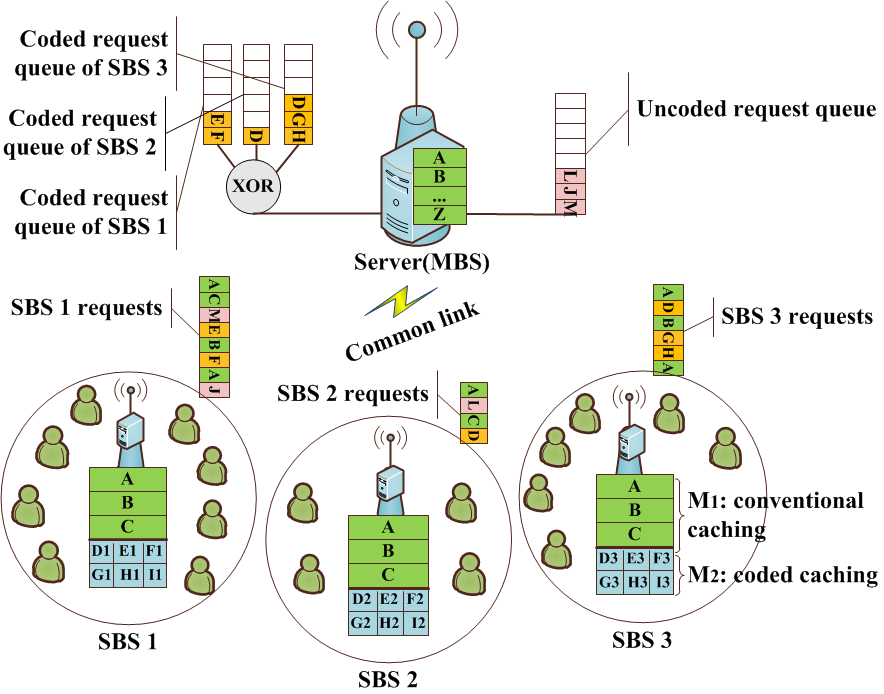}
	\caption{An example of our caching system with one MBS containing $N=26$ contents of size $F$ bits connected via an error-free shared link to $K=3$ SBSs, each with a cache size of $M \!\times\! F=5F$ bits and serving $Z_c$ end-users.}
	\label{fig1}
	\vspace{-0.5em}
\end{figure}
\section{SBS-independent non-uniform content popularity with non-uniform multiple demands  }
\label{hetero-semi}
 In this section, we assume that the content popularity distribution is the same for all SBSs. Therefore $p_{n,c} = p_n$, where $c \in \{1, 2, \ldots ,K\}$.
 
 As mentioned before, previous works such as \cite{hachem2015effect, li2017traffic, ji2017order, zhang2018coded, 8863425} show that dividing the contents into two groups is order-optimal for the coded caching scheme under non-uniform demands. Besides, \cite{ multilevel2017}, shows by some examples, that when each cache receives multiple requests, partitioning the library altogether into three groups improves caching performance over two-partitioning strategies. However, this work does not propose the closed-form expressions for optimizing partitioning. In this regard, in this section, we extend our previous grouping method of caching under non-uniform demands in \cite{9120820} to non-uniform multiple requests and formulate the optimization problem for minimizing the load of the shared link by presenting the hybrid coded-uncoded caching scheme. 
 
 \subsection{The proposed SBS-independent caching scheme}
%As mentioned before, similar to previous works, our proposed method has two phases: the placement phase and the delivery phase. 
In the placement phase of the proposed caching scheme, we categorize the contents into (at most) three groups based on their request probabilities. In particular, we first choose the $N_1$ most popular contents among all $N$ contents and then cache $M_1$ most popular contents among these $N_1$ selected ones entirely at all caches. Then, the remaining $N_1 \!-\! M_1$ contents are cached using the coded caching scheme proposed in \cite{maddah2014fundamental}. Accordingly, each cache memory is divided into two parts: $M_1 \!\times \!F$ bits of each cache are allocated to the $M_1$ most popular contents, while the remaining $(M \!-\! M_1) \!\times F$
bits of memory are allocated to the coded caching scheme with a library size of $N_1 \!- M_1$ contents. In summary, the three groups of contents resulting from our scheme are:
\begin{enumerate}
	\item  $M_1$ most popular contents that are cached completely.
	\item  $(N_1 \!-\! M_1)$ popular contents that are cached according to \cite{maddah2014fundamental}.
	\item $(N \!-\! N_1)$ least popular contents that are not cached at all.
\end{enumerate}

In the delivery phase, each SBS receives $Z_c$ number of requests. The cache of each SBS locally serves the content requests belonging to the first group, whereas the MBS server is responsible for the requests belonging to the second and third groups. 
In order to perform the coded caching scheme, the MBS has to maintain the requests for coded contents of each SBS separately. In this regard, As shown in \figurename{~\ref{fig1}}, the MBS owns $K$ distinct queues, where $q_c^\text{coded}$, $c  \in \{1, 2, \ldots, K\}$ denotes the queue that stores the requests of SBS $c$ for coded contents. Moreover, requests are stored in $q_c^\text{coded}$ by an arbitrary ordering. In addition, the MBS has one single queue which stores the requests of all SBSs for uncoded contents,
denoted by $q_{\text{uncoded}}$. 

We now explain the steps involved in the transmission of coded messages. 
Initially, MBS collects all head-of-line (HoL) requests of the queues $q_c^\text{coded}$ ($c  \in \{1, 2, \ldots, K\}$) and then, in order to response to these requests, transmits the corresponding coded messages, following the scheme in\cite{maddah2014fundamental}. 
 The MBS then updates the HoL requests in the queues and repeats the same procedure, i.e., in step $i$, the $i$th rows of all queues are considered by the coded scheme. Note that the number of requests in $q_c^\text{coded}$ could be less than $Z_c$ since some of the requests of SBS c belong to either the first or the third group and thus are not stored in $q_c^\text{coded}$.
 Even requests belonging to the second group of contents might be repetitive and thus, are not stored in $q_c^\text{coded}$ separately. Therefore, the number of queues involved in the coding process at step $i$ could be less than $K$ since some of the queues may not have any requests at step $i$. Moreover, the number of steps is at most $\max_{c\in\{1,\dots, K\}}\{Z_c\}$, which happens when all requests of the SBS with the maximum number of users are distinct and associated with the coded contents. Finally, after sending all requested contents belonging to the second group with the coded scheme, the MBS sends contents related to all requests in $q_{\text{uncoded}}$, which guarantees that all users will be able to retrieve their requested contents.

\figurename{~\ref{fig1}} depicts a scenario where the number of SBSs is $K = 3$. The SBSs $1$, $2$, and $3$ are responsible for $8$, $4$, and $6$ users, respectively, and thus, they receive $Z_1 = 8$, $Z_2=4$, and $Z_3 = 6$ requests, respectively.
The total number of contents is $N = 26$ and are ordered based on their popularity (i.e., `A' is the most popular content). The cache size is $M = 5$ contents, and we assume that $M_1 = 3$, and $N_1 = 9$. As a result, the contents `A', `B', and `C' are cached entirely. Moreover, $6$ contents (from `D' to `I') are cached based on the coded caching scheme. The remaining $17$ less popular contents (from `J' to `Z') are not cached. In the delivery phase, the uncoded data is highlighted in pink, while the contents in yellow are to be transmitted in the coded manner. The contents in green are locally hit by the local cache of SBSs, without any further transmission from the MBS. In this figure, the HoLs of all the queues during the first, second, and third steps in coded transmissions consist of $k_1 = 3$, $k_2 = 2$, and $k_3 = 1$ content requests, respectively.

\subsection{Performance analysis}
\label{sec5}
In this sub-section, we determine the expected MBS traffic load as a function of $M_1$ and $N_1$. We then characterize the optimum partitioning strategy as an optimization problem to find the minimum load.
%\color{red}
In the following, we assume that contents are sorted according to their popularities, i.e.  $p_i \geq p_j$ if $i \leq j$, (contents with a lower index are more popular). 
Also, the traffic load contributed by the MBS is either related to the requests belonging to the second group of contents (i.e. coded contents with index from $M_1 \!+\! 1$ to $N_1$) or associated with requests belonging to the third group of  contents (i.e. uncoded contents with index from $N_1 \!+\! 1$ to $N$). In this regard, in the following, we first derive the traffic related to each of these groups, separately, and then formulate the optimization problem.  

In regard with the process explained in Section~\ref{sec3},  we denote $Q_i$ to be the random variable denoting the number of non-empty queues at step $i$, where $i \!=\! 1, 2,\ldots, \max (Z_1, ... ,Z_K)$. In the following, the calculation of the MBS traffic load is presented.
\begin{thm}
\label{lemma1}
The traffic load of the coded contents at step $i$ given that $Q_i = k$, denoted by $r_1^{(i)}$, is derived as:
\begin{align}
 r_1^{(i)} =\min\Bigg(&\frac{{{K}\choose {T+1}}-{{K-k} \choose {T+1}}}{{{K} \choose{T}}}, N_1 - M\Bigg).
  \end{align}
  where $T=\frac{K \times (M-M_1)}{(N_1-M_1)}$.
\end{thm}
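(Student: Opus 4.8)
The plan is to reduce the claim to a counting statement about the Maddah-Ali--Niesen (MAN) delivery indexed over subsets of caches, applied to the second content group with the reduced parameters. First I would recall the placement for this group: each of the $N_1 - M_1$ coded contents is split into $\binom{K}{T}$ subfiles $W_{n,A}$ indexed by the $T$-subsets $A \subseteq \{1,\dots,K\}$, with $T = K(M-M_1)/(N_1-M_1)$, and cache $c$ stores $W_{n,A}$ exactly when $c \in A$. This is precisely the scheme of \cite{maddah2014fundamental} for a system of $K$ caches, library size $N_1 - M_1$, and per-cache memory $M - M_1$, so all its decodability guarantees carry over to this sub-problem.

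Next I would analyse step $i$ conditioned on $Q_i = k$. The MAN delivery transmits, for every $(T+1)$-subset $S$ of caches, the single coded symbol $\bigoplus_{c \in S} W_{d_c,\, S\setminus\{c\}}$ of size $F/\binom{K}{T}$. The key observation is that only the $k$ caches whose queues are non-empty at step $i$ carry a demand, so for a subset $S$ I keep only the terms belonging to active caches in $S$. I would then verify that dropping the inactive terms preserves decodability: an active cache $c \in S$ has stored every subfile $W_{\cdot,\, S\setminus\{c'\}}$ with $c' \neq c$ (because $c \in S \setminus \{c'\}$), hence it can cancel all remaining terms and recover its own subfile $W_{d_c,\, S\setminus\{c\}}$. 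Consequently a transmission is required for $S$ if and only if $S$ meets the active set, and an empty message is skipped whenever $S$ lies entirely within the $K-k$ inactive caches.

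The counting step is then immediate: the number of $(T+1)$-subsets meeting the active set is $\binom{K}{T+1} - \binom{K-k}{T+1}$, each contributing one symbol of size $F/\binom{K}{T}$, which gives the first argument of the minimum after normalising by $F$. For the second argument I would invoke the trivial uncoded alternative available to the MBS: broadcasting the still-uncached portion of the whole second group costs at most $(N_1 - M_1) - (M - M_1) = N_1 - M$ in units of $F$ and always suffices, mirroring the $N/K$ branch of \eqref{eq2} specialised to the reduced parameters. Taking the smaller of the coded and uncoded costs yields the stated expression.

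I expect the main obstacle to be the decodability bookkeeping in the partial-subset argument --- making sure that discarding the inactive users' terms from each XOR never removes a subfile that an active user needs, and that duplicate contents requested by distinct active caches within the same step do not reduce the subset count (they may only allow further savings, which is consistent with the claim being an achievable upper bound). A secondary point to state carefully is why $N_1 - M$ is a legitimate per-step ceiling rather than only a bound on the total traffic; I would justify this by noting that the server can always fall back, at each step independently, to the uncoded delivery of the reduced library.
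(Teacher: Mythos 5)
Your proposal is correct and follows essentially the same route as the paper's own proof: both apply the scheme of \cite{maddah2014fundamental} with the reduced parameters (library size $N_1-M_1$, cache size $M-M_1$), count the $(T+1)$-subsets that intersect the $k$ active caches to obtain $\bigl(\binom{K}{T+1}-\binom{K-k}{T+1}\bigr)\big/\binom{K}{T}$ transmissions of size $F/\binom{K}{T}$, and justify the $N_1-M$ cap via the broadcast ($N/K$) branch of \eqref{eq2} specialized to those parameters. Your explicit decodability check for dropping the inactive caches' terms from each XOR is simply a more careful rendering of the paper's one-line observation that no message need be sent to subsets containing no active cache.
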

\begin{proof}
See Appendix A for the proof.
\end{proof}
 In the coded scheme proposed by \cite{maddah2014fundamental}, in the delivery phase, the coded messages are sent to all subsets of size $T+1$ of users which has requested a content. In Lemma \ref{lemma1}, the term ${{K-k}\choose{T+1}}$ excludes those subsets that none of their members has requested a coded content. To calculate these subsets, in the following, we propose a lemma for deriving the probabilities that SBS $c$ has more than $i$ distinct coded requests. 
\begin{thm}
	\label{PIC}
	Let $P_{c,i}$ denote the probability that the number of distinct coded requests of SBS $c$, denoted by $l_{c}$, is equal  or greater than $i$, i.e., $P_{c,i} =Pr\{l_{c}\ge i\}$. $P_{c,i}$ is derived as follows:
	 \begin{align}
	&P_{c,i}  =  \sum_{j=i}^{Z_c}{Pr\{l_{c}=j\}} ,
	\label{lem2-4}
	\vspace{-0.8em}
	\end{align}
	assume the $ \text{Pr}\{ l_{c}^{(z)} = j \} $ to be the probability of having $j$ distinct coded requests in the first $z$ requests in SBS $c$, where $z \!=\! 1, 2,\ldots, Z_c$, then:
	\begin{align}
	&\text{Pr}\{ l_{c}=j \}= \text{Pr}\{ l_{c}^{(Z_c)} = j \}
	\end{align}
	$\text{Pr}\{ l_{c}^{(Z_c)} = j \}$ can be calculated with below recursive formula:
	\begin{align}
	&1) {Pr}\{ l_{c}^{(0)} = 0 \}=1 ,  {Pr}\{ l_{c}^{(z)} = j | j>z \}=0,%\\if $k$ equals to $0$:
	\nonumber \\
	& 2) {Pr}\{ l_{c}^{(z)} = 0 \} ={Pr}\{ l_{c}^{(z-1)} = 0 \} \times (1-q_{c,1}),
	\nonumber \\    %q^{c}_{1}
	&3) {Pr}\{ l_{c}^{(z)}=j\} = {Pr}\{ l_{c}^{(z-1)}=j \} \times (1-q_{c,j+1})
	\nonumber \\
	& +{Pr}\{ l_{	c}^{(z-1)}=j-1 \} \times q_{c,j},
	\label{lem2-7} 
	\end{align}
	where $q_{c.j}$ is the probability that $j$-th distinct coded content is requested at SBS $c$, and is derived as follows (in this section we assume $p_{n,c}=p_n$, $\forall c \in \{1,\ldots,K\}$):
	%\\if $j>N_1-M_1$ then $q_{c,j} =0$, otherwise it approximate as following 
	\begin{align}
	q_{c,j} = q_{j} \left\{ \begin{array}{l}
	=0,  \quad if \quad j>N_1-M_1,\\ 
	\scalebox{.92}{$\simeq (1- \dfrac {j-1}{N_1-M_1} )\times \sum_{n=M_1+1}^{N_1}{p_n},  \quad otherwise. $}
	\end{array}\right.
	%q_{j,c} = q_{j} \simeq (1- \dfrac {j-1}{N_1-M_1} )\times \sum_{n=M_1+1}^{N_1}{p_n}%}
	\label{lem2-8}
	\end{align}
\end{thm}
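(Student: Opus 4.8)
The plan is to treat the quantity $l_c^{(z)}$---the number of distinct coded contents appearing among the first $z$ requests received by SBS $c$---as the state of a discrete-time birth chain indexed by $z$, and to obtain the stated recursion as its forward (Chapman--Kolmogorov) equation. The two identities preceding the recursion are essentially definitional: $P_{c,i}=\Pr\{l_c\ge i\}=\sum_{j=i}^{Z_c}\Pr\{l_c=j\}$ is just marginalising over the admissible values $j\in\{i,\dots,Z_c\}$ (no SBS can exhibit more than $Z_c$ distinct requests), and $\Pr\{l_c=j\}=\Pr\{l_c^{(Z_c)}=j\}$ holds because $l_c$ is by construction the count after all $Z_c$ requests have arrived. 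So the real work is (i) the recursion and (ii) the closed form for $q_{c,j}$.

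For the recursion I would condition on the state after $z-1$ requests. Since the $Z_c$ user requests are i.i.d.\ draws from the popularity distribution, the sequence $l_c^{(0)},l_c^{(1)},\dots$ is non-decreasing and can increase by at most one per step. Writing $q_{c,j}$ for the probability that the incoming request is a coded content distinct from all previously seen coded contents, given that exactly $j-1$ distinct coded contents have already been observed, a last-step analysis yields exactly two mutually exclusive ways to occupy state $j$ after $z$ requests: be in state $j$ after $z-1$ and not introduce a new coded content (probability $1-q_{c,j+1}$), or be in state $j-1$ after $z-1$ and introduce the $j$-th distinct coded content (probability $q_{c,j}$). Adding these two contributions gives the main recursion; the $j=0$ line is the degenerate case in which the only way to remain at $0$ is for the new request to avoid a first coded content, i.e.\ the factor $1-q_{c,1}$; and the initialisation $\Pr\{l_c^{(0)}=0\}=1$ together with $\Pr\{l_c^{(z)}=j\mid j>z\}=0$ simply records that zero requests give zero distinct contents and that one cannot exceed the number of requests.

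The crux, and the step I expect to be the main obstacle, is justifying the closed form for $q_{c,j}$, because the exact transition probability depends not merely on the count $j-1$ but on which specific coded contents have already been requested: if the seen set is $\mathcal{S}$ then the true probability of a fresh coded request is $\sum_{n=M_1+1}^{N_1}p_n-\sum_{n\in\mathcal{S}}p_n$, which is exactly why reducing the chain to a recursion on the count alone requires an approximation (hence the ``$\simeq$''). The idea is to replace the state-dependent mass $\sum_{n\in\mathcal{S}}p_n$ of the $j-1$ already-seen contents by its average, namely $(j-1)$ times the mean per-content coded probability $\big(\sum_{n=M_1+1}^{N_1}p_n\big)/(N_1-M_1)$. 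Subtracting this averaged mass from the total coded probability $\sum_{n=M_1+1}^{N_1}p_n$ leaves the factor $\big(1-\tfrac{j-1}{N_1-M_1}\big)$ multiplying the total coded-request probability, which is precisely the claimed expression. Finally, since the coded group contains only $N_1-M_1$ contents, once $j>N_1-M_1$ no $j$-th distinct coded content can be drawn, forcing $q_{c,j}=0$ there (equivalently the factor vanishes at $j=N_1-M_1+1$). I would close by noting that in this SBS-independent section $p_{n,c}=p_n$, so $q_{c,j}$ loses its dependence on $c$ and may be written $q_j$, matching the statement.
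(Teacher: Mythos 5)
Your proposal is correct and follows essentially the same route as the paper's proof: a last-step (dynamic-programming) recursion on the count of distinct coded requests, with the count-only state justified by approximating the probability mass of the $j-1$ already-seen contents by its uniform average over the $N_1-M_1$ coded contents, yielding the factor $\bigl(1-\tfrac{j-1}{N_1-M_1}\bigr)\sum_{n=M_1+1}^{N_1}p_n$. In fact, your explicit identification of the exact transition probability $\sum_{n=M_1+1}^{N_1}p_n-\sum_{n\in\mathcal{S}}p_n$ and of the averaging step is spelled out more concretely than in the paper, which states the approximation without this derivation.
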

\begin{proof}
	See Appendix B for the proof.
\end{proof}
Using the above lemma, we derive the distribution of non-empty queues at each step of content delivery phase, where the HoL coded requests are responded.
\begin{thm}
\label{PrQi}
Assume $Pr\{Q_i=k\}$ denotes the probability that exactly $k$ SBSs request for coded contents at step $i$. It is derived as follows:
\begin{align}
%&\text{Pr}\{ Q_i = k \} = \text{Pr}\{ Q_i^{(K)} = k \} 
&\text{Pr}\{ Q_i = k \}= \text{Pr}\{ Q_i^{(K)} = k \}
\label{lem2-2}
\end{align}
where $Q_i^{(c)}$ is the random variable denoting the number of non-empty queues among the first $c$ queues, i.e., {$q_1^{coded}$, \dots, $q_c^{coded}$}, at step $i$ of coded caching.   
%If $N_1=M$, then $r_{1}$ equals to zero. 
${Pr}\{ Q_i^{(K)}= k \}$ can be calculated with the following recursive equations:
%${Pr}\{ Q_i^{(K)}= k \}$ is the solution of the following recursive equations: 
\begin{align}
&1) {Pr}\{ Q_i^{(0)}= 0 \}=1, {Pr}\{ Q_i^{(c)}= k | k>c \}=0,  %\\if $k$ equals to $0$:
\nonumber \\
& 2) {Pr}\{ Q_i^{(c)} = 0 \} ={Pr}\{ Q_i^{(c-1)} = 0 \} \times (1-P_{c,i}),
\nonumber \\
&3) {Pr}\{ Q_i^{(c)} = k \} = {Pr}\{ Q_i^{(c-1)} = k \} \times (1-P_{c,i})
\nonumber \\
& +{Pr}\{ Q_i^{(c-1)} = k-1 \} \times P_{c,i},
% &\text{Pr}\{ Q_i = k \} = {K \choose k} (P_{Zi})^{k} (1-P_{Zi})^{K-k}, 
\label{PrQiK} 
\end{align}
where $P_{c,i}$ is derived from Lemma \ref{PIC}.
\end{thm}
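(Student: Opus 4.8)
The plan is to recognize $Q_i$ as the number of successes among $K$ independent (but not identically distributed) Bernoulli trials, so that its distribution is the Poisson--binomial distribution and the claimed equations are nothing but the standard recursion for that distribution. Concretely, for each SBS $c$ I would introduce the indicator $B_{c,i}=\mathbf{1}\{l_{c}\ge i\}$ that the queue $q_c^{coded}$ is still non-empty at step $i$, i.e.\ that SBS $c$ has at least $i$ distinct coded requests. By Lemma~\ref{PIC} we have $\text{Pr}\{B_{c,i}=1\}=P_{c,i}$. By definition $Q_i^{(c)}=\sum_{c'=1}^{c}B_{c',i}$ and $Q_i=Q_i^{(K)}$, which establishes \eqref{lem2-2} at once.

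The crucial structural fact I would settle first is independence. The demand vectors $D_1,\dots,D_K$ of distinct SBSs are generated independently, since each SBS serves its own disjoint set of users who draw requests independently according to the popularity distribution. Hence the indicators $B_{1,i},\dots,B_{K,i}$ are mutually independent, and in particular $B_{c,i}$ is independent of $Q_i^{(c-1)}$, because the latter is a function only of $B_{1,i},\dots,B_{c-1,i}$. This independence is exactly what permits the success probabilities $P_{c,i}$ to differ across $c$ (they do, because the $Z_c$ differ) while still admitting a product-form recursion.

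With independence in hand I would prove the recursion by induction on $c$, the number of queues under consideration. The base case $c=0$ is immediate: $Q_i^{(0)}=0$ deterministically, so $\text{Pr}\{Q_i^{(0)}=0\}=1$; and since $Q_i^{(c)}\le c$ always, the boundary condition $\text{Pr}\{Q_i^{(c)}=k\}=0$ for $k>c$ holds. For the inductive step I would write $Q_i^{(c)}=Q_i^{(c-1)}+B_{c,i}$ and condition on $B_{c,i}$: if $B_{c,i}=1$ (probability $P_{c,i}$) then $\{Q_i^{(c)}=k\}$ requires $Q_i^{(c-1)}=k-1$, whereas if $B_{c,i}=0$ (probability $1-P_{c,i}$) it requires $Q_i^{(c-1)}=k$. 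Using the independence of $B_{c,i}$ and $Q_i^{(c-1)}$, the law of total probability yields precisely the third case of the recursion \eqref{PrQiK}; the second case is its specialization to $k=0$, where the contribution $\text{Pr}\{Q_i^{(c-1)}=-1\}$ vanishes.

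The only genuine obstacle is the independence claim; everything after it is the textbook Poisson--binomial recursion. I would therefore devote the bulk of the argument to justifying that the non-emptiness indicators across SBSs are independent, which reduces to the modeling assumption that users attached to different SBSs request contents independently. Once that is granted, both the reduction $Q_i=Q_i^{(K)}$ and the recursion follow mechanically by conditioning on one queue at a time.
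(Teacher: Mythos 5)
Your proposal is correct and follows essentially the same route as the paper: the paper's proof also defines $Q_i^{(c)}$ as the count of non-empty coded queues among the first $c$ caches and justifies the recursion \eqref{PrQiK} as a dynamic-programming decomposition, with $Q_i = Q_i^{(K)}$. The only difference is one of rigor, not approach — you make explicit (via the indicators $B_{c,i}$, the cross-SBS independence of demands, and the induction on $c$) what the paper leaves implicit, and you correctly identify the result as the standard Poisson--binomial recursion.
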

\begin{proof}
	See Appendix C for the proof.
\end{proof}	
In the following proposition, we derive the traffic load of MBS.
\begin{prop}
\label{proposition2}
Define $Z_{max}= \max_{c} Z_c$, %and at step $i$ the ${Pr}\{ Q_i^{(c)} = k \}  $ to be the probability of having $k$ caches with coded requests in first $c$ caches where $c \!=\! 1, 2,\ldots, K$. 
then the expected traffic load of coded content requests, denoted by $r_1$ is approximated by: \vspace{-0.4em}
\begin{align}
  \label{lem2-r1}
  r_{1} \simeq 
  \begin{cases}
  \scalebox{.95}{$\sum_{i=1}^{Z_{max}} \frac{{{K}\choose {T+1}}-\sum_{k=0}^{K}Pr\{Q_i=k\}{{K-k}\choose {T+1}}}{{{K} \choose{T}}}, \quad if\ N_1>M,$} \\
  0,\quad otherwise%if \ N_1=M.
  \end{cases}   
\end{align}
%where
\\
	where $Pr\{Q_{i}=k\}$ is derived from lemma \ref{PrQi}.
	Moreover, the expected traffic load of the uncoded content requests, denoted by $r_2$, is:\vspace{-0.4em}
\begin{align}   
	r_{2}=\sum_{n={N_1}+1}^{N} 1-(1-p_{n})^{\sum_{c=1}^{K}Z_c}.
	\label{proposition2-r2}
	\vspace{-0.4em}
\end{align}
Finally, the total expected traffic rate is $r = r_1+r_2$.
\end{prop}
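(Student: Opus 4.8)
The plan is to establish the two expressions for $r_1$ and $r_2$ separately, since they correspond to disjoint groups of contents (coded group two and uncoded group three) whose traffic contributions are additive by construction of the delivery scheme.

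For $r_1$, I would begin from the per-step load established in Lemma~\ref{lemma1}, which gives the conditional load $r_1^{(i)}$ at step $i$ given $Q_i = k$ non-empty coded queues. The expected load at step $i$ is obtained by conditioning on $Q_i$ and averaging: $\mathbb{E}[r_1^{(i)}] = \sum_{k=0}^{K} \Pr\{Q_i = k\}\, r_1^{(i)}(k)$, where the distribution of $Q_i$ comes from Lemma~\ref{PrQi}. Substituting the expression from Lemma~\ref{lemma1} and distributing the sum over the two terms $\binom{K}{T+1}$ and $\binom{K-k}{T+1}$, the first term is constant in $k$ and factors out as $\binom{K}{T+1}\sum_k \Pr\{Q_i=k\} = \binom{K}{T+1}$, while the second yields $\sum_k \Pr\{Q_i=k\}\binom{K-k}{T+1}$, giving exactly the summand in \eqref{lem2-r1} after dividing by $\binom{K}{T}$. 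Summing over steps $i = 1, \ldots, Z_{max}$ (the maximum possible number of coded-transmission steps) and noting that empty steps contribute zero automatically through the $Q_i$ distribution produces the stated total. The boundary case $N_1 \le M$ means no contents fall in the coded group, so $r_1 = 0$; I would dispatch this separately.

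For $r_2$, the argument is more elementary and exact. An uncoded content $W_n$ with $n > N_1$ contributes one unit of load to the shared link if and only if at least one of the $\sum_{c=1}^{K} Z_c$ total user requests across all SBSs is for $W_n$ (since the single queue $q_{\text{uncoded}}$ stores one copy regardless of multiplicity, and any uncached requested content must be transmitted once). Treating each of the $\sum_c Z_c$ requests as independent with probability $p_n$ of being content $n$, the probability that $W_n$ is requested at least once is $1 - (1 - p_n)^{\sum_c Z_c}$. Summing the indicator expectations over $n = N_1 + 1$ to $N$ and using linearity of expectation yields \eqref{proposition2-r2} directly, with no approximation. Finally, the total $r = r_1 + r_2$ follows because the two content groups are disjoint and their transmissions are performed in separate phases of the delivery process.

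The main obstacle I anticipate is justifying why the approximation symbol ($\simeq$) appears in the $r_1$ expression but equality holds for $r_2$. The $r_2$ computation is exact once independence of requests is granted, but $r_1$ inherits approximation from Lemma~\ref{PIC}, where $q_{c,j}$ is only approximately $(1 - \frac{j-1}{N_1 - M_1})\sum_{n} p_n$; this stems from modeling the drawing of distinct coded contents as if each newly requested coded content is uniformly likely among the not-yet-requested ones, which is exact only under uniform popularity within the coded group. I would need to state clearly that this independence-and-uniformity assumption propagates through the recursions of Lemmas~\ref{PIC} and~\ref{PrQi}, so the claimed $r_1$ is an approximation in exactly the sense those lemmas carry, and that the conditioning/averaging step itself introduces no further error beyond linearity of expectation.
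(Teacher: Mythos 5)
Your overall route is the same as the paper's: treat the two content groups separately, compute $r_1$ by averaging the per-step load of Lemma~\ref{lemma1} over the distribution of $Q_i$ from Lemma~\ref{PrQi} and summing over the at most $Z_{max}$ steps, and compute $r_2$ exactly as the expected number of distinct requests for uncached contents. Your $r_2$ argument and your account of where the approximation symbol comes from (the $q_{c,j}$ approximation of Lemma~\ref{PIC} propagating through Lemma~\ref{PrQi}) coincide with what the paper does.

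There is, however, one genuine gap in your $r_1$ derivation. Lemma~\ref{lemma1} gives the conditional per-step load as a \emph{minimum} of two terms,
\[
r_1^{(i)} = \min\left(\frac{\binom{K}{T+1}-\binom{K-k}{T+1}}{\binom{K}{T}},\; N_1 - M\right),
\]
whereas the target formula \eqref{lem2-r1} contains only the first term. When you ``substitute the expression from Lemma~\ref{lemma1} and distribute the sum over the two terms,'' you have silently discarded the cap $N_1 - M$; linearity of expectation cannot be pushed through a min, so this step is valid only after proving that the first argument never exceeds $N_1 - M$ in the regime $N_1 > M$. The paper devotes a substantive portion of its proof (its equations \eqref{eq10}--\eqref{eq11}) to exactly this point: the first argument is maximized at $k = K$, where it equals $\frac{K-T}{T+1} = \frac{K (N_1 - M)}{N_1 - M_1 + K(M - M_1)}$, and requiring this to be at most $N_1 - M$ reduces to $1 < \frac{N_1 - M_1}{K} + (M - M_1)$, which holds whenever $N_1 > M$ and $M - M_1 \geq 1$. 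Only then may the min be dropped and the expectation distributed as you do. Symmetrically, when $N_1 = M$ it is the \emph{second} argument, equal to zero, that is active, which is the cleanest way to obtain $r_1 = 0$ in that branch; your ``no coded contents'' reading reaches the same conclusion but bypasses the mechanism. Without this resolution of the min, the passage from Lemma~\ref{lemma1} to \eqref{lem2-r1} is incomplete.
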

\begin{proof}
From Lemma~\ref{lemma1}, the expected traffic load of the coded requests can be written as:\vspace{-0.4em}
\begin{align}
	r_1 = E\Bigg[&\sum_{i=1}^{Z_{max}}\min\Bigg(\frac{{{K}\choose {T+1}} \!-\! {{K-Q_{i}} \choose {T+1}}}{{{K} \choose{T}}}, N_1 \!-\! M\Bigg)\Bigg].
   \label{eq9}
\end{align}
Note that the expectation is taken over the random variables $Q_i$ for $i = 1, 2, \ldots, Z_{max}$. In \eqref{eq9}, the minimization function involves two terms; the first term is maximized at $Q_i = K$ as follows:
\begin{align}
 	\max\left(\frac{{{K}\choose {T+1}}-{{K-Q_{i}} \choose {T+1}}}{{{K} \choose{T}}}\right) = \frac{{{K}\choose {T+1}}}{{{K} \choose{T}}} = \frac{K-T}{T+1}.
 	\label{eq10}
\end{align}
By replacing $T$ in the above equation, we have:
  	\begin{align}
 		\frac{K-T}{T+1} = \frac{K \times (N_1-M)}{N_1-M_1 + K \times (M-M_1)}.
 	\label{eq11}
 	\end{align}

Letting \eqref{eq11} be less than $N_1-M$, i.e., the second argument in the min function in \eqref{eq9}, leads to: %If \eqref{eq11} is less than $N_1-M$, then the first term in the minimization function of \eqref{eq9} is always selected. To this end, we should thus have:
\begin{align}
		1 < \frac{N_1-M_1}{K}+(M - M_1), \nonumber
\end{align}
which always holds in the case of $N_1>M$ and $M-M_1 \geq 1$. Thus, the first argument of min in \eqref{eq9} is selected.
Also, in the case that $N_1 = M$, and therefore $M_1=M$ (pure uncoded caching), the second argument of min in \eqref{eq9} is selected, and thus we have $r_1 = 0$. Therefore, when  $N_1>M$ (and so $M_1<M$ ), \eqref{eq9} reduces to:\vspace{-0.4em}
\begin{align}
		r_{1} &= E\left[\sum_{i=1}^{Z_{max}} \frac{{K \choose T+1}-{K-Q_{i} \choose T+1}}{{K \choose T}}\right]  \nonumber \\
		&= \sum_{i=1}^{Z_{max}} \frac{{K \choose T+1} - E\left[{K - Q_i \choose T+1}\right]}{{K \choose T}} \nonumber \\
		&= \sum_{i=1}^{Z_{max}} \frac{{ K \choose T+1} - \sum_{k=0}^K Pr\{Q_i=k\}{K-k \choose T+1}}{{K \choose T}}.
		\label{r1}
		\vspace{-0.4em}
\end{align}

Next we prove \eqref{proposition2-r2}. As mentioned before, any request (in all $K$ SBSs) from the third group of contents that is not cached (contents indexed from $N_1 + 1$ to $N$) should be satisfied directly by the MBS. However, if the MBS receives multiple requests for specific content in a time slot, it uses broadcasting to send the content only once. Hence, the expected traffic load of such uncached contents is equal to the expected number of distinct requests for them. The probability that content $n$ is not requested is $(1-p_n)^{\sum_{c=1}^{K}Z_c}$. Therefore, the probability that content $n$ is requested at least one time is $1 - (1 - p_n)^{\sum_{c=1}^{K}Z_c}$. Thus, the expected total number of distinct requests of uncoded contents is equal to the sum of aforementioned expected probability for all uncoded contents, as indicated in \eqref{proposition2-r2}. This completes the proof.
\end{proof}
We now formulate the optimum partitioning problem in order to minimize the traffic load from MBS to the SBSs, i.e., $r$. %In other words, we need to find the optimum $M_{1}$ and ${N_1}$ values that minimize $r$. 
The minimization problem is formulated as follows:%,  $T= \frac{K \times (M-M_1)}{(N_1-M_1)}\in \mathbb{N}$:  
\begin{align}
&\min \limits_{\substack{ M \leq N_1 \leq N  \\  0 \leq M_1 \leq M}} \!\{
r_1+r_2 \}
\nonumber \\
&s.t. \nonumber \\
&T=\frac{K \times (M-M_1)}{(N_1-M_1)}\in \mathbb{N}.
\label{optimization1}
\end{align}
  
%As can be seen in the \eqref{optimization1}, 
If $N_1>M$ and $M_1=0$, then \eqref{optimization1} is reduced to the optimization of the two-partitioning pure coded scheme. Also, it can be proved that when the popularity of contents is the same for all SBSs, then caching the $M$ most popular contents in SBSs is the optimal placement strategy of the pure uncoded scheme. In other words, if $N_1=M$, then \eqref{optimization1} is reduced to the optimal pure uncoded scheme. 

 To find the optimal placement strategy of the hybrid caching scheme, we adopt an enhanced exhaustive search algorithm to solve \eqref{optimization1} in polynomial time. To reduce the complexity of the search algorithm, we consider the following simplifications:
\begin{itemize}
	\item  
	We only consider those values of $N_1$ and $M_1$ which result in  $T = \frac{K \times(M-M1)}{(N1-M1)}$ to be an integer value.
	\item  By considering $\frac{{K \choose T+1} }{{K \choose T}}= \frac{K-T}{T+1}$, we calculate  $\frac{K-T}{T+1}$ only once for each possible $N_1$ and $M_1$ values.
	\item  As mentioned before, by  defining an array and storing the result of earlier computations (dynamic programming), we can compute \eqref{PrQiK} and \eqref{lem2-7} in polynomial time. We can also use this technique to skip duplicate computations in other parts of the problem; for example, storing the result of \eqref{proposition2-r2} for one value of $N_1$ could be used to compute it for the next $N_1$ values with less computing.
\end{itemize}
\section{SBS-dependent non-uniform content popularity with non-uniform multiple demands }
\label{hetero-hetero}
  \begin{figure}[!t]
	\centering
	\includegraphics[width=3.2in]{./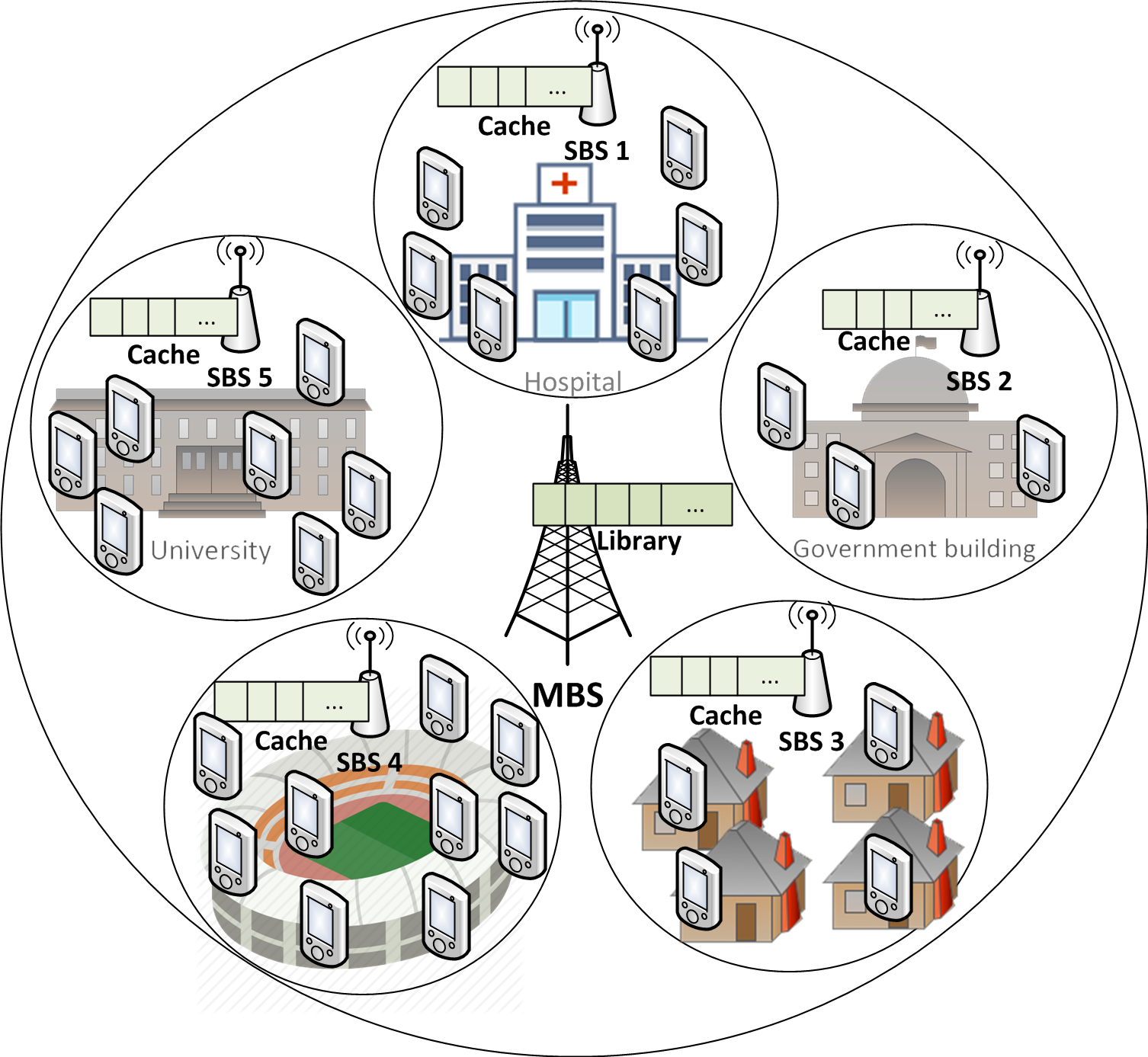}
	\caption{An example of SBS-dependent (heterogeneous) content popularity.}
	\label{heterogeniousModel}
	\vspace{-0.5em}
\end{figure}
In the previous section, we assumed that the content popularity distribution is the same for all SBSs. However, in the real world, users with similar preferences usually gather in the same locations, and there may be several popularity groups inside the coverage area of an MBS. Some contents may be popular for almost all groups of users, however, some other contents may be popular only for specific groups of them. 
As an example,  \figurename{~\ref{heterogeniousModel}},  illustrates an MBS with five SBSs under its coverage, where the SBSs serve groups with different preferences. If the coventional uncoded caching is used in such a scenario, i.e., the SBSs cache the files which are globally popular in their caches, then MBS has to send the local popular contents of SBSs repeatedly, leading to high traffic loads. However, if the SBSs store their local popular files, then MBS only transmits the least popular contents, taking advantage of broadcasting. Also in the case of applying coded caching, choosing files which are only popular in one SBS and not other SBSs leads to the waste of cache memory in other SBSs, and consequently increases the traffic load. Therefore, it is crucial to consider the heterogeneous SBS-dependent popularities for designing the caching strategy. On the other hand, SBSs may share similar preferences, e.g., in  \figurename{~\ref{heterogeniousModel}}, the users of SBS 4 are interested in sports, politic events are more popular in SBSs 2 and 5, and users of SBS 3 follow both politics and sports. Based on these similarities, we can form two groups of SBS, a group composed of SBSs 3 and 4 and another one composed of SBSs 2,3, and 5. %In this regard, we form different clusters of SBSs in this section to improve the performance of caching.

In this section, we extend the hybrid caching scheme of the previous section to the case of SBS-dependent popularities. In particular, unlike the previous section, where the parameters $N_1$ and $M_1$ are the same for all SBSs, in this section, they are optimized considering possibility of different groups.
\subsection{The proposed SBS-dependent caching scheme}
In this part, we define the caching strategy in the case of heterogeneous popularities, i.e., when $p_{n,c}$ is not the same for every SBS $c$, as follows. The cache of each SBS $c$ is divided into two part; uncoded and coded. The capacity of uncoded part in SBS $c$ is denoted by $M_{1c}$, i.e., $M_{1c}$ files are cached entirely in the uncoded part of SBS $c$. We let $Y_{n,c}=1$ if file $n$ is cached uncoded at SBS $c$ and  $Y_{n,c}=0$ otherwise. Consequently, we have $\sum_{n} Y_{n,c}=M_{1c}$, and $M-M_{1c}$ of cache capacity is left for coded part. On the other hand, we define groups of SBSs, where the SBSs within a group share similar preferences and thus, participate in the same coded caching scheme. It is worth noting that a single SBS may participate in multiple groups since while a part of its preferences are common in a group, other parts may be popular in other groups. In this regard, the grouping scheme of SBSs are defined as follows. We define $G = \{G_1,G_2,...,G_{|G|}\}$ to be a cover of $S$ (the set of SBSs),  where $G_{i} \subset S$ with $|G_i|\geq 2 $ and ${\cup}^{|G|}_{i=1} G_{i} = S$. According to this definition, we have $|G|$ groups of SBSs in cover $G$ which they have at least two members (for the sake of applying coded caching scheme), may be overlapping, and cover all SBSs. Moreover, we let $S_{c,g}=1$ if SBS $c$ participate in group $g \in G$, and $S_{c,g}=0$ otherwise. Then, the number of SBSs in group $g$, denoted by $K_g$, is derived as $\sum_{c}  S_{c,g}=K_g$.

A separate coded caching scheme is applied in each group. Since SBS $c$ may participate in multiple  groups, it dedicates  a part of its cache to each of its participating group.  In particular, the capacity $M_g$ is dedicated to group $g$, if $S_{c,g}=1$. Moreover, we let $X_{n,g}=1$ if file $n$ participates in the coded scheme of group $g $ and $X_{n,g}=0$, otherwise. Consequently, the number of files participating in coded scheme of group g, denoted by $N_g$, is derived as $\sum_n X_{n,g}=N_g, \forall g$. 

It is worth noting that in the proposed caching strategy for the heterogeneous case, unlike the homogeneous case, each SBS caches $M_{1c}$ uncoded contents that are different from other SBSs. Also, while the dedicated cache capacity for coded contents remains the same inside a group, but in general in each SBS, a different capacity is dedicated to coded contents. As such, the dedications in each SBS should satisfy $M_{1c}+\sum_{g} S_{c,g} \times M_g = M$. Moreover, since $|G|$ concurrent coded schemes are applied, MBS should maintain the different set of coded queues for each of these groups. In this regard, the queues $q^{coded}_{1,g},\dots, q^{coded}_{K_g,g}$ are the queues dedicated to group $g$. Finally, a specific content $W_n$ may be cached uncoded in SBS $c$, while it also be included in coded scheme of some of the participating groups. 
\subsection{Performance analysis}
In the following, we derive the traffic load of the MBS under the proposed caching strategy for SBS-dependent non-uniform content popularity distribution.
\begin{thm}
\label{PIgC}
Let $P_{c,g,i}$ denote the probability that the number of distinct coded requests of SBS $c$ in the cluster $g$, denoted by $l_{c,g}$, is equal or greater than $i$, i.e., $P_{c,g,i}=Pr\{l_{c,g}\geq i\}$. $P_{c,i,g}$ is derived as follows:
\begin{align}
&P_{c,g,i}=  \sum_{j=i}^{Z_c}{Pr\{ l_{c, g}=j\}}.
\label{lem5-4}
\end{align}
Also, let $ \text{Pr}\{ l_{c, g}^{(z)} = j \} $ be the probability of having $j$ distinct coded requests in first $z$ requests in SBS $c$, where $z \!=\! 1, 2,\ldots, Z_c$, then:
\begin{align}
&\text{Pr}\{ l_{c, g}=j \}= \text{Pr}\{ l_{c, g}^{(Z_c)} = j \},
\end{align}
where $\text{Pr}\{ l_{c, g}^{(Z_c)} = j \}$ can be calculated with below recursive formula:
\begin{align}
&1) {Pr}\{ l_{c, g}^{(0)} = 0 \}=1 ,  {Pr}\{ l_{c, g}^{(z)} = j | j>z \}=0,
\nonumber \\
& 2) {Pr}\{ l_{c, g}^{(z)} = 0 \} ={Pr}\{ l_{c, g}^{(z-1)} = 0 \} \times (1-q_{c,g, 1}),
\nonumber \\   
&3) {Pr}\{ l_{c, g}^{(z)}=j\} = {Pr}\{ l_{c, g}^{(z-1)}=j \} \times (1-q_{c,g, j+1})
\nonumber \\
& +{Pr}\{ l_{c, g}^{(z-1)}=j-1 \} \times q_{c,g, j},
\label{lem5-7} 
\end{align}
where $q_{c,g, j} $ is approximated to be:
%\\if $j>N_g$ then $q_{c,g, j} =0$, otherwise it approximate as following: 
\begin{align}
q_{c,g, j} =\left\{ \begin{array}{l}
=0,  \quad if \quad j>N_g,\\ 
\simeq (1- \dfrac {j-1}{N_g} )\times \sum_{n=1}^{N}{ X_{n, g} .p_{n,c}}\quad otherwise. 
\end{array}\right.
\label{lem5-8}
\end{align}
\end{thm}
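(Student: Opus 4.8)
The plan is to mirror the argument behind Lemma~\ref{PIC}, since \eqref{lem5-4}--\eqref{lem5-8} are precisely its per-group, SBS-dependent counterpart: the only substantive changes are that the governing popularities are now $p_{n,c}$ rather than $p_n$, and that a request counts toward $l_{c,g}$ only when it falls on one of the $N_g$ contents flagged by $X_{n,g}=1$ in group~$g$.

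First I would derive the single-step hit probability $q_{c,g,j}$ of \eqref{lem5-8}. The probability that an individual request issued in SBS~$c$ lands on \emph{some} coded content of group~$g$ is obtained by summing the SBS-dependent popularities over exactly those contents, namely $\sum_{n=1}^{N} X_{n,g}\,p_{n,c}$. Conditioning on the event that $j-1$ distinct coded contents of the group have already been observed, I would approximate the probability that the current request reveals a \emph{new} coded content by the fraction of still-unseen coded contents, $1-\frac{j-1}{N_g}$, treating the $N_g$ coded contents of the group as approximately equiprobable. Multiplying the two factors yields \eqref{lem5-8}, and the clamp $q_{c,g,j}=0$ for $j>N_g$ simply records that no $(N_g+1)$-th distinct coded content can exist. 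This step is where the $\simeq$ enters and is the main obstacle: the exact new-content probability depends on \emph{which} specific $j-1$ contents have been seen, not merely on their count, so the symmetric surrogate $1-\frac{j-1}{N_g}$ is an approximation whose chief virtue is that it makes $q_{c,g,j}$ a function of $j$ alone, thereby rendering the distinct-count process a pure-birth Markov chain on $\{0,1,\dots\}$ with state-dependent advance probability.

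Given this structure, the recursion \eqref{lem5-7} follows by a standard coupon-collector transition argument. Processing the first $z$ requests one at a time, the number of distinct coded contents seen can only stay put or increase by one at each new request. From state $j$ after $z-1$ requests it remains at $j$ exactly when the $z$-th request does \emph{not} uncover the $(j+1)$-th new coded content, an event of probability $1-q_{c,g,j+1}$; it arrives at state $j$ from state $j-1$ exactly when the $z$-th request uncovers the $j$-th new coded content, of probability $q_{c,g,j}$. Summing these two mutually exclusive contributions gives line~3 of \eqref{lem5-7}, while line~2 is its $j=0$ specialization, in which only the stay term survives with non-hit probability $1-q_{c,g,1}$. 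The base cases in line~1 are immediate: zero requests yield zero distinct contents with certainty, and one cannot accumulate more distinct contents than requests received, so $\Pr\{l_{c,g}^{(z)}=j\}=0$ whenever $j>z$.

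Finally, since SBS~$c$ issues exactly $Z_c$ requests per slot, the distribution of its total distinct-coded count is read off at the terminal step, $\Pr\{l_{c,g}=j\}=\Pr\{l_{c,g}^{(Z_c)}=j\}$, after which \eqref{lem5-4} is merely the definition of the upper tail, $P_{c,g,i}=\Pr\{l_{c,g}\geq i\}=\sum_{j=i}^{Z_c}\Pr\{l_{c,g}=j\}$. As in the homogeneous case of \eqref{lem2-7}, the entire recursion is then evaluated in polynomial time by dynamic programming.
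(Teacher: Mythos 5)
Your proposal is correct and follows essentially the same route as the paper: the paper's own proof of Lemma~\ref{PIgC} simply defers to the argument of Lemma~\ref{PIC}, adapted to per-cluster quantities (only the $K_g$ participating SBSs, the indicator $X_{n,g}$ selecting the $N_g$ coded contents, and SBS-dependent popularities $p_{n,c}$), which is exactly the adaptation you carry out, with the same equiprobable-within-group approximation for $q_{c,g,j}$ and the same birth-chain recursion and dynamic-programming evaluation. In fact your write-up is more explicit than the paper's, which states the correspondence without rederiving the transition argument.
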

\begin{proof}
See Appendix D for the proof.
\end{proof}	
\begin{thm}
	\label{lem_Qig}
Let $Pr\{Q_{i, g}=k\}$ be the probability that exactly $k$ SBSs in cluster $g$ request for coded contents at step $i$ . Then, we have:
\begin{align}
&\text{Pr}\{ Q_{i, g}= k \}= \text{Pr}\{ Q_{i, g}^{(K_g)} = k \},
\label{lem5-2}
\end{align}
where $Q_{i, g}^{(c_g)}$ is the random variable denoting the number of non-empty queues among the first $c_g$ queues in cluster $g$, i.e., {$q_{1,g}^{coded}$,..,$q_{c_g,g}^{coded}$}, at step $i$ of coded caching.   
${Pr}\{ Q_{i, g}^{(K_g)}= k \}$ can be calculated with the following recursive equations:
\begin{align}
&1) {Pr}\{ Q_{i, g}^{(0)}= 0 \}=1, {Pr}\{ Q_{i, g}^{(c_g)}= k | k>c_g \}=0,  %\\if $k$ equals to $0$:
\nonumber \\
& 2) {Pr}\{ Q_{i, g}^{(c_g)} = 0 \} ={Pr}\{ Q_{i, g}^{(c_g-1)} = 0 \} \times (1-P_{c_g,g,i}),
\nonumber \\
&3) {Pr}\{ Q_{i, g}^{(c_g)} = k \} = {Pr}\{ Q_{i, g}^{(c_g-1)} = k \} \times (1-P_{c_g,g,i})
\nonumber \\
& +{Pr}\{ Q_{i, g}^{(c_g-1)} = k-1 \} \times P_{c_g,g,i},
% &\text{Pr}\{ Q_i = k \} = {K \choose k} (P_{Zi})^{k} (1-P_{Zi})^{K-k}, 
\label{heteroPrQiK} 
\end{align}
where $P_{c_g,g,i}$ is derived from lemma \ref{PIgC}.
\end{thm}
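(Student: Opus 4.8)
The plan is to recognize $Q_{i,g}$ as a Poisson-binomial count and to reconstruct its distribution one queue at a time, exactly mirroring the argument behind Lemma~\ref{PrQi}. First I would observe that the queue $q_{c,g}^{coded}$ is non-empty at step $i$ precisely when SBS $c$ has accumulated at least $i$ distinct coded requests for group $g$, that is, when $l_{c,g}\ge i$. Writing $B_c=\mathbb{1}[l_{c,g}\ge i]$ for each $c$ with $S_{c,g}=1$, we then have the identity $Q_{i,g}=\sum_{c:\,S_{c,g}=1}B_c$, and by Lemma~\ref{PIgC} each indicator satisfies $\Pr\{B_c=1\}=P_{c,g,i}$. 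This reduces the lemma to computing the distribution of a sum of $K_g$ Bernoulli variables with heterogeneous success probabilities.

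Next I would establish that the $B_c$ are mutually independent. The requests received by distinct SBSs are generated by disjoint sets of users drawn independently from the popularity distributions $\{p_{n,c}\}$, so the counts $l_{c,g}$, and hence the indicators $B_c$, are independent across $c$. Only the within-group queues $q_{1,g}^{coded},\dots,q_{K_g,g}^{coded}$ enter the sum, and an SBS that belongs to several overlapping groups simply contributes a separate indicator to each group it joins; for the single group $g$ under consideration this introduces no dependence, so $Q_{i,g}$ is genuinely a sum of $K_g$ independent (non-identically distributed) Bernoulli variables.

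I would then prove the recursion by induction on the number of queues. Defining $Q_{i,g}^{(c_g)}=\sum_{c\le c_g}B_c$ as the count restricted to the first $c_g$ queues of the group, one has $Q_{i,g}^{(0)}=0$ with probability one and $Q_{i,g}^{(K_g)}=Q_{i,g}$, which yields \eqref{lem5-2} together with the base case $\Pr\{Q_{i,g}^{(0)}=0\}=1$. For the inductive step I would condition on the state of the $c_g$-th queue: with probability $P_{c_g,g,i}$ it is non-empty, raising the count over the first $c_g-1$ queues by one, and with probability $1-P_{c_g,g,i}$ it is empty, leaving that count unchanged. Invoking the independence of $B_{c_g}$ from $B_1,\dots,B_{c_g-1}$, the law of total probability gives exactly \eqref{heteroPrQiK}, while the boundary convention $\Pr\{Q_{i,g}^{(c_g)}=k\mid k>c_g\}=0$ follows because at most $c_g$ of the first $c_g$ queues can be non-empty.

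The only genuine subtlety—and the step I would treat most carefully—is the independence claim together with the bookkeeping of queue indices: since a single SBS may participate in several groups, I must confirm that restricting attention to group $g$ isolates precisely the $K_g$ indicators $\{B_c\}_{S_{c,g}=1}$ and that these remain independent. Once that is settled, the recursion is simply the standard dynamic-programming evaluation of a Poisson-binomial distribution, and the remaining manipulations are routine, identical in form to the homogeneous case already handled in Lemma~\ref{PrQi}.
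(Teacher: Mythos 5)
Your proposal is correct and follows essentially the same route as the paper: the paper also reduces $Q_{i,g}$ to counting non-empty per-SBS queues within cluster $g$ and evaluates its distribution via the same recursion over queues (i.e., the dynamic-programming evaluation of a sum of independent, non-identically distributed Bernoulli indicators with success probabilities $P_{c,g,i}$ from Lemma~\ref{PIgC}), restricted to the $K_g$ participating SBSs. Your explicit justification of independence across SBSs (disjoint user populations) and of why overlapping group membership causes no dependence within a single group is a detail the paper leaves implicit, but it does not change the argument.
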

\begin{proof}
The proof of this lemma is similar to the proof of lemma \ref{PrQi}, except that in this lemma, there are $G$ coded delivery clusters. Therefore, the equations are calculated for each cluster separately. However, for each cluster, only the SBSs that participate in it are considered.
\end{proof}	
\begin{prop}
\label{prop2}
	If $T_g=\frac{K_g \times M_g}{N_g}$, and $Z_{max}= \max_{c} Z_c$, then the expected traffic load of coded content requests, denoted by $r_1$, is  approximated by:
	\begin{align}
	&r_{1} =  \sum_{g \in G} \times \nonumber \\ 
	&
	\begin{cases}
	 \scalebox{.95}{$\sum_{i=1}^{Z_{max}}\frac{{{K_g}\choose {T_g+1}}-\sum_{k=0}^{K_g}Pr\{Q_{i, g}=k\}{{K_g-k}\choose {T_g+1}}}{{{K_g} \choose{T_g}}},\quad if\ N_g>M_g$}, \\
0,\quad otherwise
		\end{cases}   
	\label{prop2-r1}
	\end{align}
	where $Pr\{Q_{i, g}=k\}$ is derived from lemma \ref{lem_Qig}.
	Moreover, the expected traffic load of uncoded content requests, denoted by $r_2$, is:\vspace{-0.4em}
	\begin{align}   
	&r_{2}= \sum_{n=1}^{N} \Bigg(1- \nonumber \\ 
	&
	 \displaystyle\prod_{c=1}^{K}\Big(1-(p_{n,c}\times (1-Y_{n,c})\times \displaystyle\prod_{g \in G} (1-X_{n,g}. S_{c,g}) )\Big)^{Z_c}\Bigg).
	\label{prop2-r2}
	\vspace{-0.4em}
	\end{align}
Finally, the total expected traffic rate is $r = r_1+r_2$.
\end{prop}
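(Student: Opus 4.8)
The plan is to establish the two components $r_1$ and $r_2$ separately, exploiting the fact that the SBS-dependent strategy decomposes delivery into $|G|$ independent coded sub-schemes (one per group) plus a single uncoded broadcast. Since the MBS maintains disjoint coded queues $q^{coded}_{1,g},\dots,q^{coded}_{K_g,g}$ for each group $g$ and dedicates a separate cache slice $M_g$ in every participating SBS, the coded transmissions of distinct groups never interact; hence by linearity of expectation the total coded load is $r_1=\sum_{g\in G} E[\text{coded load of group }g]$, and it suffices to evaluate each summand.

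For a fixed group $g$, the restriction of the system to the SBSs with $S_{c,g}=1$ and the files with $X_{n,g}=1$ is exactly a homogeneous coded-caching instance with $K_g$ caches, coded library of size $N_g$, dedicated cache $M_g$, and parameter $T_g=K_gM_g/N_g$. I would therefore invoke Lemma~\ref{lemma1} verbatim under the substitutions $K\to K_g$, $T\to T_g$, and cap $N_1-M\to N_g-M_g$, obtaining the per-step load $\min\!\big(\frac{\binom{K_g}{T_g+1}-\binom{K_g-k}{T_g+1}}{\binom{K_g}{T_g}},\,N_g-M_g\big)$ given $Q_{i,g}=k$. The argument used in the proof of Proposition~\ref{proposition2} then carries over to show that when $N_g>M_g$ the first argument of the minimum is always selected, whereas when $N_g=M_g$ the group is purely uncoded and contributes $0$, which produces the case split in \eqref{prop2-r1}. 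Taking the expectation over $Q_{i,g}$ with the distribution supplied by Lemma~\ref{lem_Qig} (which rests on Lemma~\ref{PIgC}), noting that steps beyond $\max_{c\in g}Z_c$ contribute nothing since $Q_{i,g}=0$ forces $\binom{K_g}{T_g+1}-\binom{K_g}{T_g+1}=0$, and finally summing over $i=1,\dots,Z_{max}$ and over $g$ yields \eqref{prop2-r1}.

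For $r_2$ the idea is to count the expected number of distinct contents that no cache can serve. A request from a user of SBS $c$ for content $n$ must be delivered by the MBS as uncoded traffic precisely when $n$ is neither stored uncoded at $c$ nor covered by the coded scheme of any group $c$ joins; since $Y_{n,c}$, $X_{n,g}$, and $S_{c,g}$ are binary, this indicator equals $(1-Y_{n,c})\prod_{g\in G}(1-X_{n,g}S_{c,g})$, so a given user of SBS $c$ generates an uncoded request for $n$ with probability $\tilde p_{n,c}=p_{n,c}(1-Y_{n,c})\prod_{g\in G}(1-X_{n,g}S_{c,g})$. Because the $Z_c$ requests within an SBS and the requests across SBSs are independent, the probability that $n$ is never requested as uncoded traffic anywhere is $\prod_{c=1}^{K}(1-\tilde p_{n,c})^{Z_c}$, and since the MBS broadcasts each distinct uncoded file only once, its expected contribution is the complementary event. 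Summing over all $n$ gives \eqref{prop2-r2}, and $r=r_1+r_2$ follows.

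The reduction for $r_1$ is essentially mechanical once per-group independence is recognized; the delicate point is the bookkeeping in $r_2$. I expect the main obstacle to be arguing that the factor $\prod_{g\in G}(1-X_{n,g}S_{c,g})$ removes exactly those requests already charged to $r_1$, so there is no double counting between the coded and uncoded loads, while simultaneously ensuring that a content which is coded in one group but non-cached for an SBS outside that group is still charged to $r_2$ for that SBS. Verifying that this single SBS-by-SBS, group-by-group indicator is at once the correct exclusion for the coded term and the correct inclusion for the uncoded term is where I would concentrate the care.
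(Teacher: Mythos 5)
Your proposal is correct and follows essentially the same route as the paper: the coded load $r_1$ is obtained by decomposing delivery into per-group instances and applying the SBS-independent analysis (Lemma~\ref{lemma1} and Proposition~\ref{proposition2} with $K\to K_g$, $T\to T_g$, and the $N_g>M_g$ versus $N_g=M_g$ case split) to each cluster via Lemmas~\ref{PIgC} and~\ref{lem_Qig}, and the uncoded load $r_2$ is derived from exactly the indicator $(1-Y_{n,c})\prod_{g\in G}(1-X_{n,g}S_{c,g})$ and the complementary-probability/broadcast-once argument the paper uses. Your treatment is in fact more explicit than the paper's (which compresses the $r_1$ step to ``similar to Proposition~\ref{proposition2}, summed over clusters''), and you correctly attribute the broadcast contribution to $r_2$, where the paper's proof contains a typo stating it contributes to $r_1$.
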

\begin{proof}
Equation \eqref{prop2-r1} of this proposition is similar to equation \eqref{lem2-r1} of proposition \ref{proposition2}, except that in this proposition, there are $G$ coded delivery clusters. Therefore, the coded rate of the MBS is calculated for each cluster separately. Finally, the total coded rate of the MBS is the summation of calculated coded rates of all clusters. 
In order to prove  \eqref{prop2-r2}, we should consider that if content $W_n$ is cached neither uncoded at SBS $c$, i.e., $Y_{n,c}=0$, nor coded, i.e., $X_{n,g} \times S_{c,g}=0, \forall g \in G$, then content $W_n$ will be responded by MBS if it is requested in SBS $c$. 
 Consequently, the probability that content $W_n$ is not requested by SBS $c$ from MBS is equal to:  $\Big(1-(p_{n,c}\times (1-Y_{n,c})\times \displaystyle\prod_{g \in G} (1-X_{n,g}. S_{c,g}) )\Big)^{Z_c}$.
  If at least one SBS requests $W_n$ then, the file will be broadcast by MBS once and thus contributes to the traffic $r_1$.This completes the proof.	
\end{proof}
Let $\widetilde{G}$ denote the set of all possible covers of SBSs ($|\widetilde{G}|= 2^K-K-1$).Then the optimum partitioning problem with the objective of minimizing the traffic load from the MBS to SBSs is written as follows:
\begin{align}
&%r^{*}(N, M, K, \{Z_c\}_{c=1}^K, \{p_{n,c}\}_{n,c=1}^{N,K}) = %\nonumber \\
\min \limits_{\substack{
		G \subset \widetilde{G}
	  }} \{\min \limits_{\substack{
0 \leq M_{1c} \leq M \\
%\forall g \in G: 
1 \leq M_g \leq M \\
M_g < N_g \leq N  \\
X_{n, g} , Y_{n,c} \in \{0,1\}
%0 \leq K_g \leq K   
}}\{r_1+r_2\}\} 
\nonumber \\
&s.t. \nonumber \\
%&G=|\widetilde{G}|\\
&  T_g= \frac{K_g \times M_g}{N_g} \in \mathbb{N}, 
%\nonumber \\&S_{c,g}, X_{n, g} , Y_{n,c} \in \{0,1\}, 
\nonumber \\ &\sum_{n=1}^{N}{X_{n, g} }=N_g, \forall g \in G. %-M_1, 
%\nonumber \\ &\sum_{k=1}^{K}{S_{c,g} }=K_g, \forall g \in G. 
\nonumber \\& \sum_{n=1}^{N}{Y_{n,c} }=M_{1c} , \forall c \in \{1,\ldots,K\}. 
\nonumber \\& M_{1c}+\sum_{g\in G}{M_g \times S_{c,g} }=M , \forall c \in \{1,\ldots,K\}. 
\label{optimization2}
\end{align}
%Despite calculating $r$ based on the proposition \ref{prop2} has polynomial complexity for a specific configuration, 
As can be seen in \eqref{optimization2}, finding the optimal placement strategy for the hybrid scheme is intractable. Even if the optimal covering ($G$), the memory allocations ($M_g$) and the number of contents involved in each group ($N_g$) are specified then in order to find the best content placement, we need to calculate the MBS rate $r$ for $\displaystyle\prod_{c=1}^{K}{N \choose M_{1c}}$ $\times$ $\displaystyle\prod_{g=1}^{G} {N \choose N_g}$ possible configuration. 
A special case of this problem is when there is no significant similarity in the content popularity among the SBSs. In this case, increasing the number of coded delivery groups only reduces the global cache gain, and therefore the optimal cover $(G^\star)$ will have one member which is the set of all SBSs.
 %i.e., $\sum_{c=1}^{K}{S_{c,g} }=K$.  
Here, the optimal configuration is similar to the previous section, in which the SBSs' caches are divided into two parts. However, the difference is that the uncoded contents of one SBS can be different from other SBSs. Also, although the coded contents for all SBSs are the same, they are not necessarily the same as those would be selected based on global popularity. Besides, to find the best content placement for this special case, we need to calculate the MBS rate $r$ for ${N \choose M_1}^K \times {N \choose N_1}$ possible configuration. Even  in the cases of the two-partitioning pure coded,  \cite{hachem2015effect, li2017traffic, ji2017order, zhang2018coded, 8863425} and pure uncoded schemes we need to check ${N \choose N_1}$ and ${N \choose M}^K$ possible configurations respectively to find the optimal content placement.   
\section{Numerical Results }%and Discussions}
\label{sec6}
In this section, the performance of the proposed hybrid scheme is evaluated and compared with the baseline pure coded and conventional uncoded schemes and previous works, as well as the two-partitioning scheme reported in 
\cite{hachem2015effect, li2017traffic, ji2017order, zhang2018coded, 8863425}, through numerical evaluations and simulation. We validate analytical results with the simulations conducted using MATLAB for a period of $2000$ time slots. 
%Our experiments are run on a computer equipped with 8 GB of RAM and an AMD FX -4100 CPU operating at 3.6 GHz with 4 physical cores. However, our runs on MATLAB were non-parallel and consumed 25\% of the CPU.
In the following, we first evaluate the proposed hybrid scheme for SBS-independent non-uniform popularity distribution under a heterogeneous number of demands, and then we consider the SBS-dependent popularity distribution. 
\subsection{%heterogeneous number of demands,
	SBS-independent non-uniform popularity distribution}
   In this subsection, for analytical results, the optimum placement strategy of the hybrid scheme for integer values of $T = K \times (M-M_1)/(N_1-M_1)$ is obtained from \eqref{optimization1}. Also, we suppose that the content popularity distribution follows the Zipf popularity profile with parameter $\alpha > 0$ as follows:
   \begin{equation}
   p_{n}=\dfrac{(\frac{1}{n})^\alpha}{\sum_{j=1}^{N}(\frac{1}{j})^\alpha}.
   \label{eq1}
   \end{equation}
   In the following, we first verify the approximation used for finding the optimal configuration of the hybrid scheme. We then study the effect of content popularity, the standard deviation of the number of users in SBSs, and the system scale on the optimum traffic load of the shared link. %To this aim, for Figs. {~\ref{fig2}, ~\ref{vsZStD}}, we suppose that %the number of contents available in the MBS is $N = 1000$, $K = 10$, $M = 100$, and $avg(Z) = 10$. 
%To generate requests in the simulation, each user selects a random number between $0$ and $1$ at each time slot, then the corresponding content to this random number based on the cumulative distribution function of the content popularity is selected, and the user generates a request for this content. In other words, at each time slot, $Z_c$ random numbers are generated for SBS $c$, $\forall c \in \{1,\ldots,K\}$  to determine requests. 

\figurename{~\ref{VsConfigDevZ}} shows the simulation results of MBS traffic load versus $N_1$ for different scenarios of the hybrid, two-partitioning pure coded \cite{hachem2015effect, li2017traffic, ji2017order, zhang2018coded, 8863425} and pure uncoded caching schemes. %In \figurename{~\ref{VsConfig}}, each SBS serves 10 users ($Z_c=10$ $\forall c \in \{1,\ldots,K\}$ ), and in \figurename{~\ref{VsConfigDevZ}}, SBSs serve different numbers of users as $Z_c=1, 3, 5, 7, 9, 11, 13, 15, 17, 19, \forall c \in \{1,\ldots,10\}$ respectively. The content popularity follows the Zipf distribution with $\alpha = 1$. 
As can be understood from \eqref{optimization1} and \figurename{~\ref{VsConfigDevZ}}, the hit ratio of local cache improves as $M_1$ increases. But on the contrary, because of the reduction of memory space for the coded section, the bandwidth load required to satisfy the coded content requests increases. Moreover, by increasing the value of $N_1$, although the contents that are not cached decrease, the required bandwidth load to satisfy requests of coded contents rises.  
%The optimal configuration obtained from the optimization problem for uniform user distribution $(M_1^* \!=\! 37$ and $N_1^* \!=\! 352)$ is indicated in \figurename{~\ref{VsConfig}}. Also, for non-uniform user distribution, the optimal configuration  $(M_1^* \!=\! 40$ and $N_1^* \!=\! 240)$ is indicated in \figurename{~\ref{VsConfigDevZ}}. 
As can be seen in \figurename{~\ref{VsConfigDevZ}}, the optimal configuration found by the optimization problem has the lowest load compared to the other configurations in the simulation. 
%\begin{figure*}[!t]
%\begin{subfigure}{0.49\textwidth}
%	\centering
%	\includegraphics[width=3.2in , height=2 in]{./Figs/bandwidthVsConfig.png}
%	\caption{}
%	\label{VsConfig}
%\end{subfigure}
%\begin{subfigure}{0.49\textwidth}
%	\centering
%	\includegraphics[width=3.2in , height=2 in]{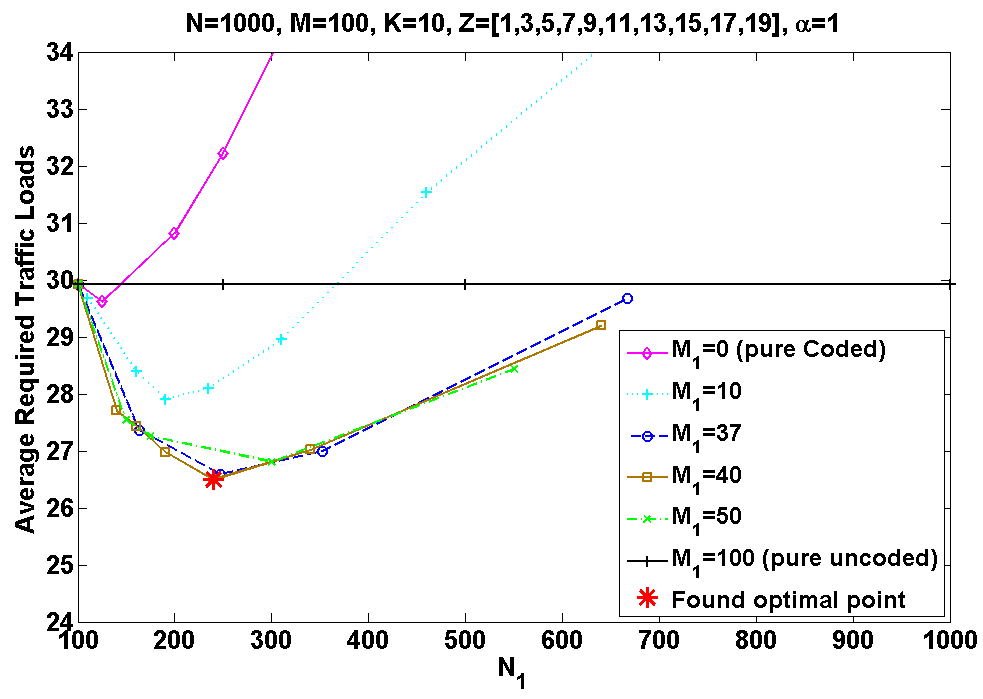} 
%	\caption{}
%	\label{VsConfigDevZ}
%\end{subfigure}
%\caption{MBS Traffic load as a function of $N_1$ for different $M_1$ and the calculated optimal point for when (a) uniform user distribution in SBSs (each SBS serves $Z_c=10$ users ) and (b) non-uniform user distribution in SBSs.}
%\label{fig2}
%\vspace{-0.8em}
%\end{figure*}
%\begin{figure*}[!t]
%	\centering
%	\begin{subfigure}{0.49\textwidth}
%		\includegraphics[width=3.2in, height=2 in]{./Figs/RvsPopularity.png}
%		\caption{}
%		\label{vsPopularity}
%	\end{subfigure}
%	\begin{subfigure}{0.49\textwidth}
%		\includegraphics[width=3.2in, height=2 in]{./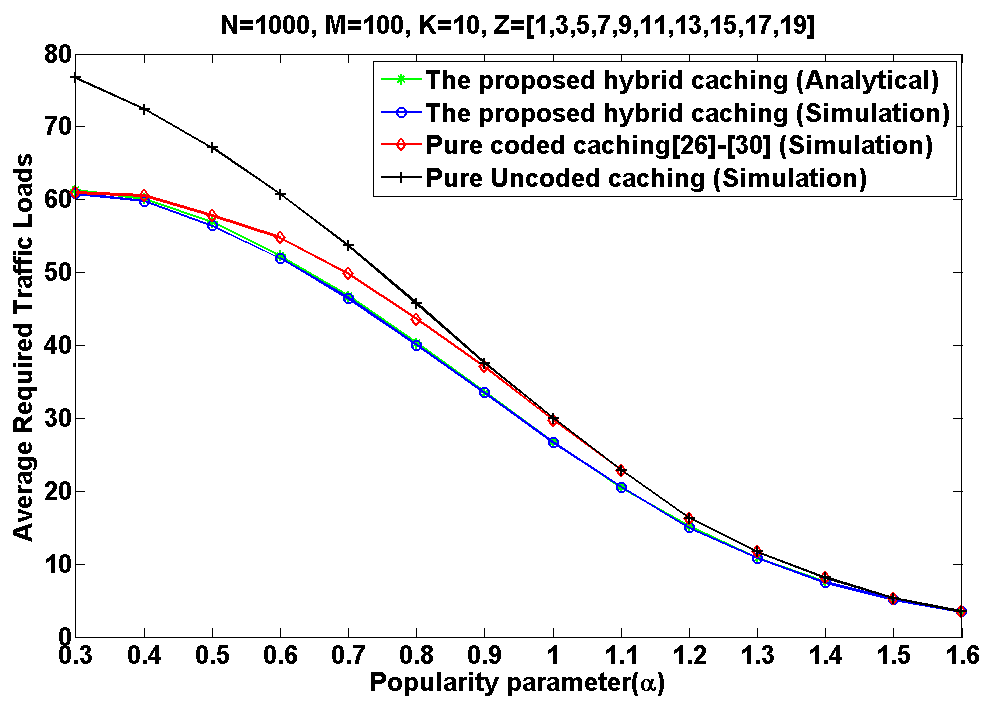}
%		\caption{}
%		\label{vsPopularityDevZ}
%	\end{subfigure}
%	\caption{MBS Traffic load as a function of popularity parameter (a) uniform user distribution in SBSs (each SBS serves $Z_c=10$ users) and (b) non-uniform user distribution in SBSs.}
%	\label{fig3}
%	\vspace{-0.8em}
%\end{figure*}
\begin{figure*}[!t]
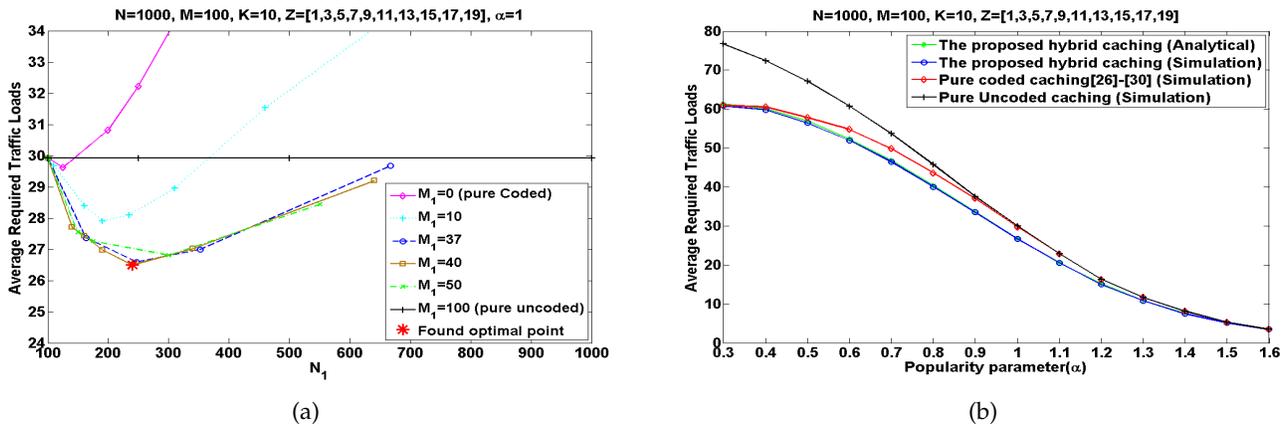

	\begin{subfigure}{0.49\textwidth}
		\centering
		\includegraphics[width=3.2in , height=2 in]{./Figs/bandwidthVsConfigDevZ.png}
		\caption{}
		\label{VsConfigDevZ}
	\end{subfigure}
	\begin{subfigure}{0.49\textwidth}
		\centering
		\includegraphics[width=3.2in , height=2 in]{./Figs/RvsPopularityDevZ.png} 
		\caption{}
		\label{vsPopularityDevZ}
	\end{subfigure}
	\caption{MBS Traffic load as a function of (a) $N_1$ for different $M_1$ (b) popularity parameter.}
	\label{fig2}
	\vspace{-0.8em}
\end{figure*}
\begin{table}
	\centering
	\small
	\caption{Optimal configuration ($N_1^*$ and $M_1^*$) of the hybrid scheme for different users distribution in SBSs, where $K=10, N=1000, M=100, \alpha=1$ and there are 100 users in the system.}
	\label{table:StD}
	\begin{tabular}{|l|l|l|}%{l l}
		\hline 
		No. of users ($Z_1$, ... ,$Z_{10}$)& $\sigma$(Z)& $N_1^*$, $M_1^*$ \\ 
		\hline 
		10 10 10 10 10 10 10 10 10 10 &$0.0$&$352$, $37$\\ 
		8 \ \ 9 \ \ 9 \  9 \  9 \  10 11 11 12  12 & $1.4142$ &$344$, $39$ \\ 
		6 \ \ 8 \ \ 9 \  9 \  9 \  10 11 12 12  14&$2.3094$&$340$, $40$\\
		5 \ \ 7 \ \ 9 \  9 \  9 \  10 11 12 13  15&	$2.9059$ &$332$, $42$\\ 
		4 \ \  6 \ \  9 \  9 \  9 \  10 11 12 14  16 & $3.5277$&$328$, $43$ \\
		3 \ \  5 \ \  7 \  9 \  9 \  11 11 13 15  17&$4.3461$&$316$, $46$\\ 
		2 \ \ 4 \ \ 6 \  8 \  9 \  11 12 14 16  18&$5.1854$ &$240$, $40$\\ 
		1 \ \ 3 \ \ 5 \ 7 \ 9 \ 11  13 15 17  19 &$6.0553$ &$240$, $40$\\
		0 \ \ 2 \ \ 4 \ 6 \ 9 \ 11 14 16 18  20 &$6.9442$ &$233$, $43$\\ 
		0 \ \ 2 \ \ 2 \ 3 \ 7 \ 11 14 16 20  25 &$8.5894$&$219$, $49$ \\ 
		1 \ \ 1 \ \ 1 \ 1 \ 1 \ 5 \ 15 20 25  30&$11.4504$ &$172$, $52$\\		
		\hline 
	\end{tabular} 
\end{table}	
%\begin{figure}[t]
%	\centering
%	\includegraphics[width=3.2in, height=2 in]{./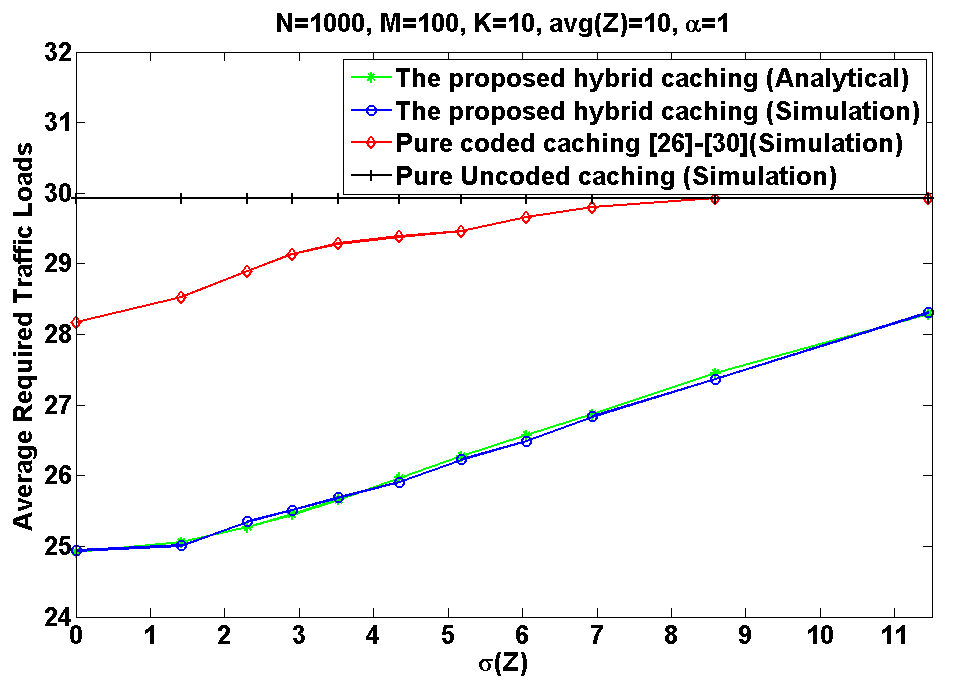}
%	\caption{MBS Traffic load as a function of standard deviation of $Z$ when average of $Z$ equals to $10$.}
%	\label{vsZStD}
%	\vspace{-0.8em}
%\end{figure}
\begin{figure*}[!t]
	\centering
%	\begin{subfigure}{0.49\textwidth}
%		\includegraphics[width=\textwidth, height=2 in]{./Figs/new100010010.png}
%		\caption{}
%		\label{fig4a}
%	\end{subfigure}
%	~ 
%	\begin{subfigure}{0.49\textwidth}
%		\includegraphics[width=\textwidth, height=2 in]{./Figs/new10000100010.png}
%		\caption{}
%		\label{fig4b}
%	\end{subfigure}
%	
%	\begin{subfigure}{0.49\textwidth}
%		\includegraphics[width=\textwidth, height=2 in]{./Figs/new100001000100.png}
%		\caption{}
%		\label{fig4c}
%	\end{subfigure}
	\begin{subfigure}{0.49\textwidth}
			\includegraphics[width=\textwidth, height=2 in]{./Figs/RvsZStD.png}
			\caption{}
			\label{vsZStD}
		\end{subfigure}
	~  
	\begin{subfigure}{0.49\textwidth}
		\includegraphics[width=\textwidth, height=2 in]{./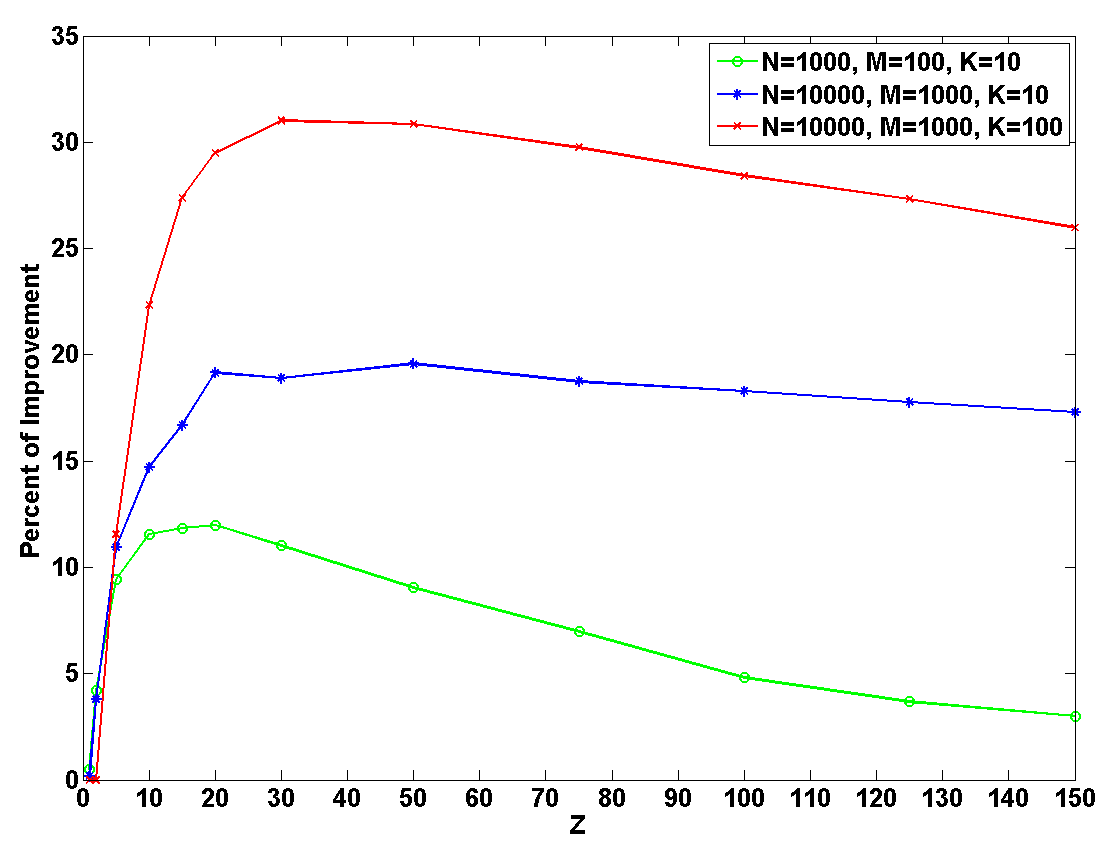}
		\caption{}
		\label{fig4d}
	\end{subfigure}
%	\caption{(a, b, c) MBS Traffic load and (d) percent of improvement of the hybrid caching compared to the two-partitioning pure coded methods \cite{hachem2015effect, li2017traffic, ji2017order, zhang2018coded, 8863425}, as a function of the number of users per SBS $(Z)$ for different system scales.}
	\caption{(a) MBS traffic load as a function of standard deviation of $Z$  (b) percent of improvement (offloading MBS traffic load) of the proposed hybrid caching compared to the two-partitioning pure coded methods \cite{hachem2015effect, li2017traffic, ji2017order, zhang2018coded, 8863425}. }%for different system scales.}
	\label{fig4}
	\vspace{-0.8em}
\end{figure*}
In Figs.~\ref{vsPopularityDevZ} and %{~\ref{fig3}, ~\ref{vsZStD},
~\ref{fig4}, for the hybrid and pure coded schemes \cite{hachem2015effect, li2017traffic, ji2017order, zhang2018coded, 8863425}, we first find the $M_1^*$ and $N_1^*$ values for each parameter settings that minimize the MBS traffic load. The caching schemes are then evaluated for their corresponding optimal configuration via simulation. 
\figurename{~\ref{vsPopularityDevZ}} illustrates the simulation and analytical results for the traffic load as a function of the Zipf parameter in the interval $\alpha \in [0.5, 1.6]$. %for two scenarios of uniform and non-uniform user distribution in the SBSs. 
\tablename{~\ref{table:StD}} shows the optimal configuration of the hybrid caching for some different scenarios of user distribution in the SBSs where there are $100$ users in the system ($K=10$). 
\figurename{~\ref{vsZStD}} also depicts the traffic load of the MBS as a function of the standard deviation of the number of users in the SBSs. This figure shows a comparison of the optimal placement results of different schemes for the configurations of \tablename{~\ref{table:StD}}.
It is evident from these figures that the simulation results are very close to the analytical findings, and the hybrid caching can lead to significant traffic off-loading compared to the two-partitioning pure coded \cite{hachem2015effect, li2017traffic, ji2017order, zhang2018coded, 8863425} and pure uncoded schemes. 
%It can be seen in \figurename{~\ref{fig3}} that the optimal configuration of $M_1 = M$ is achieved when the Zipf parameter ($\alpha$) is larger than $1.5$. Therefore, under such conditions, the optimal traffic load of the two-partitioning pure coded and the hybrid schemes are equal to the traffic load of the uncoded scheme. Also, when $\alpha < 0.5$, the optimal configuration is $M_1^* = 0$ and $N_1^* = N$. In other words, caching with maximum diversity is optimal, and thus, uncoded caching has the worst performance.

As can be seen in Fig. {~\ref{vsZStD}}, %s and{~\ref{fig3}} 
 %when the user distributions in the SBSs are non-uniform, the performance of the coded schemes is reduced. This is because 
 when the standard deviation of the number of users in the SBSs increases, the performance of the coded schemes is reduced. This is because the number of coded requests in the corresponding queues of the MBS becomes very unbalanced. As a result, the expectation of the number of sending steps for a specific configuration increases, and the optimal configuration, as shown in %\figurename{~\ref{fig2}} and 
 \tablename{~\ref{table:StD}}, tends to a smaller quantity of $N_1$ and larger quantities of $M_1$. %However, as can be seen in \figurename{~\ref{VsConfigDevZ}}, the optimal configuration %of uniform user distribution ($N_1=352$, $M_1=37$) 
 %and other configuration near it %$N_1=240$, $M_1=40$ 
 %are also relatively good and perform significantly better than the two-partitioning pure coded and pure uncoded schemes.
%Therefore, even if we do not know the user distributions in the SBSs and only have an estimation of its deviation, by using the hybrid scheme, we can achieve better performance compared to pure methods. 
%\figurename{~\ref{fig4}} shows the traffic load of the MBS versus the number of users within the coverage of each SBS for three different system scales (content library sizes, cache capacities, and number of SBSs). In this figure, the number of users of all SBSs is assumed to be the same and equal to $Z$. The 
%number of users in the system $(K \times Z)$ is taken to be $ 10 \times Z$ for \figurename{~\ref{fig4a}} and \figurename{~\ref{fig4b}}, and $ 100 \times  Z$ for \figurename{~\ref{fig4c}}. Also, \figurename{~\ref{fig4d}} shows the percent of improvement of the hybrid scheme compared to the two-partitioning pure coded methods\cite{hachem2015effect, li2017traffic, ji2017order, zhang2018coded, 8863425} in terms of MBS traffic load for the three system scales.
\figurename{~\ref{fig4d}} shows the percent of improvement of the hybrid scheme compared to the two-partitioning pure coded methods\cite{hachem2015effect, li2017traffic, ji2017order, zhang2018coded, 8863425} in terms of MBS traffic load versus the number of users within the coverage of each SBS for three different system scales (content library sizes, cache capacities, and number of SBSs). In this figure, the number of users of all SBSs is assumed to be the same and equal to $Z$.
As can be seen, the hybrid scheme has made significant improvement in the MBS traffic load, especially for $Z \!>\! 2$. In addition, when $Z$ increases, this percent of improvement increases at first but then decreases. This is because in the hybrid scheme, by increasing $Z$, some requests are hit in the $M_1$ part of SBSs' cache, and therefore the number of steps of sending coded messages becomes significantly less than the pure coded methods. But when $Z$ increases further, the probability of duplicate requests also increases (especially for smaller library sizes). Therefore, the number of steps of sending coded messages in pure coded methods, and therefore the percentage of improvement is reduced.
%
 %In comparison with \figurename{~\ref{fig4a}} and \figurename{~\ref{fig4b}}, we also observe that the traffic load on the MBS of the coded schemes (pure and the hybrid) in \figurename{~\ref{fig4c}} is much less than the pure uncoded scheme. This gap is due to the global caching gain of coded schemes \cite{maddah2014fundamental} and because the number of SBSs is 10 times that of \figurename{~\ref{fig4a}} and \figurename{~\ref{fig4b}}.  
\subsection{ SBS-dependent non-uniform popularity distribution}
In this subsection, we suppose the content popularity distribution is not the same for different SBSs. 
%As mentioned in the previous section, finding the optimal placement in this condition is intractable for all the proposed methods. To this aim, we first evaluate the optimal placement of the hybrid, pure coded, and pure uncoded caching schemes under a small system scale and then evaluate the proposed heuristic methods on larger scales. 
In \figurename{~\ref{optimalheterogenious},  we suppose that the number of available contents in the MBS is $N = 4$, there are a total of $K = 4$ SBSs, each SBS only serves one user ($Z_c=1$ $\forall c \in \{1,\ldots,K\}$), and the content popularity distribution is according to the \tablename{~\ref{table:popularity}}. The results are depicted for various cache capacity of SBSs ($M=1,2$ and $3$). As can be seen in this figure, the hybrid scheme offloads more traffic compared to the two-partitioning pure coded and pure uncoded schemes. 

In particular, when $M=2$, the hybrid scheme is better than both other schemes. In this condition, the best configuration of the hybrid scheme is $N_1^*=3$ and $M_1^*=1$, where ($W_3$) is stored entirely in the caches of $SBS1\&2$, and $W_4$  is stored entirely in the caches of the $SBS3\&4$, also ($W_1, W_2$) are stored with the coded scheme in the caches of all SBSs. Whereas optimal placement of the two-partitioning pure coded method is $N_1^*=4$, or in other words, all $W_1 \dots W_4$ are cached with the coded scheme in the caches of all SBSs. Besides, the optimal placement of pure uncoded scheme is caching $W_1$ (or $W_2$) in all SBSs, $W_3$ in $SBS1\&2$, and $W_4$ in  $SBS3\&4$. In the pure uncoded scheme, one of $W_1$ or $W_2$ is cached in all SBSs, and for the other one, it benefits from multicast opportunities. %In the following, the proposed heuristic methods are evaluated.
\begin{table}
	\centering
	\small
	\caption{Content Popularity distribution for $N=4$ and $K=4$ scenario.}
	\label{table:popularity}
	\begin{tabular}{|l|l|l|l|l|}%{l l}
		\hline 
		&$W_1$& $W_2$& $W_3$,& $W_4$ \\ 
		\hline 
		$SBS1$&0.3 &0.2&0.5& 0.0\\ 
	   $SBS2$&0.2 &0.3&0.5& 0.0\\ 
       $SBS3$&0.3 &0.2&0.0& 0.5\\ 
	  $SBS4$ &0.2 &0.3&0.0& 0.5\\ 
		\hline 
	\end{tabular} 
\end{table}
\begin{figure}[t]
	\centering
	\includegraphics[width=3.2in, height=2 in]{./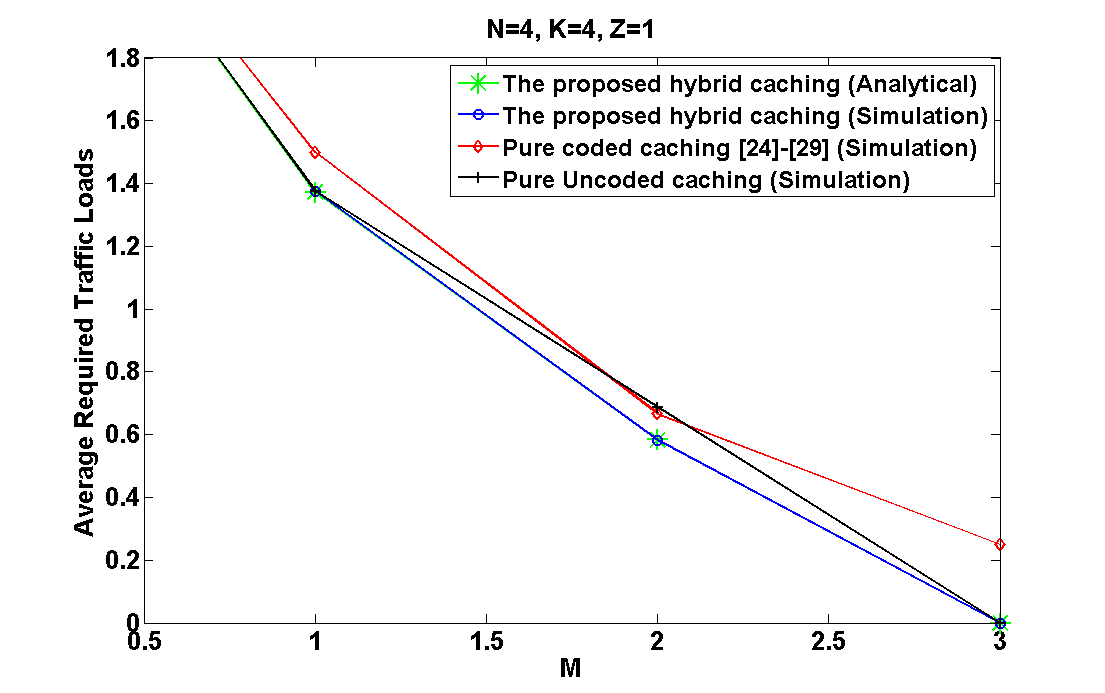}
	\caption{MBS traffic load as a function of M.}
	\label{optimalheterogenious}
	\vspace{-0.8em}
\end{figure}
\section{Conclusion and future works}
\label{sec7}
In this paper, we have studied content caching for the shared medium networks and have proposed a hybrid coded-uncoded caching under heterogeneous users' behaviors. In practice, the proposed scenario corresponds to a cellular network that includes an MBS and multiple SBSs where each SBS is equipped with a limited size cache and serves multiple users. In particular, we assume that each SBS can request a different number of contents than the other ones. Also, we consider non-uniform content popularity distribution, which can be the same for all SBSs (SBS-independent) or different for each one (SBS-dependent). 
We derive explicit closed-form expressions for the server load of the proposed hybrid caching at the delivery phase and formulate the optimum cache partitioning problem. %For the SBS-independent content popularity distributions, the proposed scheme find almost the best cache placement in polynomial complexity. %However, for the heterogeneous non-uniform content popularity distributions, finding the best cache placement of the proposed scheme has exponential complexity. Therefore, in this paper, we proposed several heuristic methods, which among them, $C$-$DH$-$GL$ has the least load on the server. 
Validated by simulation results, our findings showed that the proposed scheme outperforms the baseline schemes of pure uncoded and pure coded caching, as well as the two-partitioning scheme existing in the literature.
For future topics, we first plan to propose heuristic and machine-learning methods in the case of SBS-dependent non-uniform popularity distribution to find appropriate configuration of the proposed hybrid scheme in polynomial complexity. Then, we try to extend this work for online caching problems. 
\appendices
%\begin{appendices}
%\label{Lemma1 Proof}
\section{Lemma1 Proof}
\begin{proof}
Assume that each SBS has one request and all contents are to be cached coded. Then, according to the coded caching scheme in \cite{maddah2014fundamental}, each content is split into ${K} \choose{T}$ non-overlapping fragments, each of size $f/{{K}\choose{T}}$, where $T= \frac{KM}{N}$. Also, each cache selects a $T/K$ fraction of all fragments. 
Then, in each transmission, MBS chooses a new subset of SBSs with size $T+1$, chooses one fragment from the content requested by each SBS, and then, XORs $T+1$ chosen fragments and transmits that. This procedure continues until all possible subsets are chosen. Note that in each transmission, the fragment which is missing at one SBS is available at all other $T$ SBSs ( see the cache placement phase in \cite{maddah2014fundamental}). Consequently, in each transmission, each selected SBS is able to decode one missing fragment using the received signal as well as its cached content. Finally, after ${K} \choose {T+1}$ multicast transmissions, each SBS retrieves its requested content completely. But as mentioned earlier, in step $i$ of coded transmissions in our problem, all SBSs do not have necessarily request, i.e., the coded queues of some SBSs may be empty. In this case, no coded messages are transmitted to those subsets of SBSs that none of their members has a request.     
Hence, if the number of caches that have requests for the coded contents in step $i$ equals $k$, then the number of unnecessary transmissions is ${K-k} \choose {T+1}$. Consequently, the number of multicast transmissions at this step is ${K}\choose {T+1}$ $-$ ${K-k} \choose {T+1}$, where the size of each transmission is equal to $F/{{K} \choose{T}}$ and $T= \frac{K \times (M-M_1)}{(N_1-M_1)}$. The latter is due to the fact that the cache and library sizes are $M-M_1$ and $N_1- M_1$, respectively.

Contrarily, if $(N_1 - M_1) < k$, then the MBS enjoys an improvement of $(N_1 - M_1)/k$ from
broadcast. Thus, from \eqref{eq2}, the traffic load of coded contents is given as:\vspace{-0.4em}
\begin{align}
k \times \Bigg(1 - \frac{M - M_1}{N_1 - M_1}\Bigg) \times \frac{N_1 - M_1}{k} = N_1 - M.
\label{eq4}
\vspace{-0.8em}
\end{align}
This completes the proof.
% that's all folks
\end{proof}
\section{Lemma2Proof}
\begin{proof}
If $l_{c}$ denotes the number of distinct coded requests of SBS $c$, then, $P^{(c)}_i$ equals to $Pr\{l_{c}>=i\}$ and can be calculated according to \eqref{lem2-4}. 
Due to the possibility of duplicate coded requests, the native approach to calculate $Pr\{l_{c}=j\}$ has exponential complexity. However, an approximation of it can be calculated with recursion and dynamic programming in polynomial complexity as fallowing:
if the $Pr\{l^{(z)}_{c}=j\}  $ denotes the probability of having $j$ distinct coded requests in first $z$ requests in SBS $c$, where $z \!=\! 1, 2,\ldots, Z_c$, then $Pr\{l_{c}=j\}$ is equal to  $Pr\{l^{(Z_c)}_{c}=j\}$ and can be calculated with a recursive formula that is given in \eqref{lem2-7}, where the notation $q_j^c$ denotes the probability of requesting the $j$th distinct coded content at the next request in SBS $c$,  where $j-1$ distinct coded contents have been requested until this request. Therefore, $q_j^c$ only depends on the popularity of coded contents and previous $j-1$ distinct requested coded contents. On the other hand, calculating $q_j^c$ is independent of all number of requests. In this section, we assume that the content popularity distribution is SBS-independent in other words, $p_{n,c}=p_n$, $\forall c \in \{1,\ldots,K\}$. Therefore, based on the above definition, $q^c_j$ is the same for all SBSs and hence $q^c_j=q_j$, $\forall c \in \{1,\ldots,K\}$.

Due to the non-uniform popularity distribution of the contents and possibility of duplicate requesting the previous $j-1$ requested coded contents, calculating the exact amount of $q_j$ is highly complicated.
To this aim, we approximate $q_j$ by approximating the probability of requesting $j-1$ previous contents as shown in \eqref{lem2-8}. Apparently, when the popularity distribution is almost uniform, this approximation is more accurate. Contrarily, when the popularity distribution is extremely non-uniform, where few contents are in high demand, this approximation yields lower accuracy and may be calculating a bit larger value for the $r_1$.  However, under extremely non-uniform popularity distributions, the hybrid caching strategy tends to cache these high demand contents entirely (tends to a higher value for $M_1$), and this approximation may reinforce this tendency. By caching the most popular contents entirely, the error of this approximation is greatly reduced. Besides, our goal is to find the best library partitioning policy rather than calculating the exact rate. Therefore, the possible error of such approximation does not affect rate calculation for all contents but only for contents located in different partitions based on different policies.
Hence, as the negligible impact of this approximation on our choice of library partitioning, i.e., $N_1$ and $M_1$, will also be shown numerically and via simulation in the following sections, this approximation is reasonable and has minimal impact on choosing the optimal policy. However, using this approximation is not recommended for selecting the best caching policy for pure coded methods.

This completes the proof.
\end{proof}
	% that's all folks
\section{Lemma3Proof}
\begin{proof}
$\{Q_i=k\}$ occurs when exactly $k$ SBSs receive at least $i$ distinct requests for the coded contents, while the rest of the SBSs receive less than $i$ distinct requests for these contents. Since the number of requests of each SBS may be different from other SBSs, the probability of requesting the coded contents is different for different SBS. Therefore, calculating the $Pr\{Q_i=k\}$ with the native approach is a combinatorial problem with exponential complexity. However, we can  calculate $Pr\{Q_i=k\}$ in polynomial complexity by using recursion and  dynamic programming as follows. If the ${Pr}\{ Q_i^{(c)} = k \}  $ denotes the probability of having $k$ caches with coded requests in first $c$ caches, where $c \!=\! 1, 2,\ldots, K$, then as it is shown in \eqref{lem2-2}, the $Pr\{Q_i=k\}$ is equal to ${Pr}\{ Q_i^{(K)} = k \}$, and it can be calculated with the recursive formula given in \eqref{PrQiK}, where the notation $P^{(c)}_i$ denotes the probability that SBS $c$ (which receives $Z_c$ request at each time slot from $Z_c$ users) has at least $i$ distinct requests for coded contents. 

This completes the proof.
\end{proof}
\section{Lemma4Proof}
\begin{proof}
The proof of lemma\ref{PIgC} is similar to lemma\ref{PIC}, and the proof of equations \eqref{lem5-4}-\eqref{lem5-7} of this lemma is similar to lemma\ref{PIC}, except that in lemma\ref{PIgC}, there are $G$ coded delivery clusters, and therefore these equations are calculated for each one separately. However, for each cluster, only the SBSs that participate in it are considered. 
In addition, in lemma\ref{PIgC}, the $p_{n,c}$ is not the same for all SBSs. Therefore equation \eqref{lem5-8} has some differences from \eqref{lem2-8}. However, likewise lemma\ref{PIC}, $q_{g,j}^c$ for each cluster $g$ only depends on the popularity of $N_g$ contents and previous $j-1$ requests for these contents from corresponding SBS $c$. Also, $q_{g,j}^c$ is calculated by approximating the probability of requesting the previous $j-1$ of $N_g$ contents in SBS $c$. Except that in this lemma, $q_{g,j}^c$ is not the same for different SBSs, and also for each SBS, it is not the same for different clusters. Note that for each cluster $g$ only $K_g$ SBSs that participate in it are considered. Therefore,  $q_{g,j}^c$ is calculated only for SBSs, which are participated in cluster $g$. As shown in \eqref{lem5-8}, The array $X$ is used to determine coded contents corresponding to cluster $g$.

This completes the proof.
\end{proof}
%\end{appendices}
\bibliographystyle{IEEEtran}
\bibliography{journal} 
\end{document}